\newtheorem{cond}{Conditions}
\newtheorem{theorem}{Theorem}
\newtheorem{prop}{Proposition}
\newtheorem{rem}{Remark}
\newtheorem{cor}{Corollary}
\newtheorem*{keywords}{Keywords}
\newtheorem{definition}{Definition}
\DeclareMathOperator{\diag}{diag}
\newcommand{\obs}{y}
\newcommand{\missing}{\varphi}
\newcommand{\meanrandom}{\beta}
\newcommand{\varrandom}{\Gamma}
\newcommand{\nbInd}{N}
\newcommand{\nbMiss}{p}
\newcommand{\nbMeasPerInd}{J}
\newcommand{\covmatrixerr}{\Sigma}
\begin{document}

\title{Likelihood ratio test for variance components in nonlinear mixed effects models.}

\author{C. Baey \\
		Universit\'e de Lille 1, Laboratoire Paul Painlev\'e\\
		Cit\'e Scientifique, 59655 Villeneuve d'Asq Cedex, France
		\and P.-H. Courn\`ede \\
	   CentraleSup\'elec, Laboratoire MICS \\
	   8, Rue Joliot-Curie, 92190 Gif sur Yvette, France
	   \and 
	   E. Kuhn\\
	   INRA, MaIAGE\\
	    Domaine de Vilvert, 78352 Jouy-en-Josas, France
	   }
%


\maketitle

\begin{abstract}
Mixed effects models are widely used to describe heterogeneity in a population. A crucial issue when adjusting such a model to data consists in
identifying fixed and random effects. From a statistical point of view, it remains to test the nullity of the variances of a given subset of random effects.   Some authors have proposed to use the likelihood ratio test and have established its asymptotic distribution in some particular cases. Nevertheless, to the best of our knowledge, no general variance components testing procedure has  been fully investigated yet. In this paper, we study the likelihood ratio test properties to test that the variances of a general subset of the random effects are equal to zero in both linear and nonlinear mixed effects model, extending the existing results. 
 We prove that the asymptotic distribution of the test is a chi-bar-square distribution, that is to say a mixture of chi-square distributions, and we identify the corresponding weights. We highlight in particular that the limiting distribution depends on the presence of correlations between the random effects but not on the linear or nonlinear structure of the mixed effects model. We illustrate the finite sample size properties of the test procedure through simulation studies and apply the test procedure to two real datasets of dental growth and of coucal growth.
\end{abstract}

\begin{keywords}
Chi-bar-square distribution, inference under constraints, hypothesis testing, likelihood ratio test, nonlinear mixed effects models, variance components
\end{keywords}

\section{Introduction}
Mixed effects models have been extensively used in population models in order to account for heterogeneity in populations and to describe the intra and inter-individual variability  (see \cite{Pin00,Dav03,Lavielle2014}). 
In mixed effects models, parameters are of two types: on one side, fixed effects that are common to all the individuals of the population; on the other side, random effects that vary from one individual to the other. The last ones are also called individual parameters.

From a modelling point of view, a key question when adjusting a population model to a dataset is to identify the fixed and random effects of the model. From a statistical point of view, it can be rephrased as a test on the nullity of the variances of a given subset of all the random effects. Several authors have been interested in likelihood ratio tests (LRT) in this context. However the standard theoretical results on the asymptotic distribution of the likelihood ratio test have been established under the assumption that the parameter space 
is open.  In the context of linear mixed models,  this assumption is fulfilled when considering some random effects variances being equal to zero, as long as the sum of the random effects variances and of  the residual variance remains positive. Nevertheless, this assumption is generally not fulfilled  in the context of nonlinear mixed effects models. Indeed, testing for null variances of random effects remains to test parameter values lying on the boundary of the parameter space which is no more open. In this setting, the standard theoretical results on the asymptotic distribution of the likelihood ratio test can not be applied  (see \cite{Sil11}) and one should resort to constrained statistical testing. We recall here some of the main existing results.
Let us denote by $\Theta$ the  parameter space, by $\Theta_0$ the subset of $\Theta$ corresponding to the null hypothesis and by $\Theta_1$ the subset corresponding to the alternative hypothesis, with $\Theta_0 \subset \Theta_1 \subset \Theta$.
\cite{Cher54}, assuming that $\Theta$ is open, treated the case where the true value of the parameter lies on the boundary of $\Theta_0$ and $\Theta_1$, which is assumed to be a proper set of $\Theta$, i.e. strictly contained in $\Theta$.
He provided a representation of the asymptotic distribution of the LRT and proved that it is asymptotically equivalent to testing the mean of a multivariate Gaussian distribution based on one single observation.
 A few years later, \cite{Chant74} generalized these results by considering the case where the true value lies in a subset of $\Theta$ which may not be a proper subset. \cite{Sha85} studied the asymptotic distribution of a larger class of tests when the true value is on the boundary of $\Theta_0$ and an interior point of $\Theta$. He established that the asymptotic distribution is a mixture of chi-square distributions.
Simultaneously, \cite{Self87} obtained similar results in the case where the true value is on the boundary of $\Theta$.  They proved in particular that the limiting distribution of the LRT for testing that the variance of one single random effect is equal to 0 is a mixture $\frac{1}{2} \delta_0 + \frac{1}{2} \chi^2_1$, where $\delta_0$ is the Dirac distribution at 0 and $\chi^2_1$ is the chi-square distribution with $1$ degree of freedom. These results were then extended to the general case of $M$-estimation by \cite{Gey94}.%
Building upon those works, several authors have addressed the issue of variance components testing in the specific context of mixed effects models.
\cite{stramlee94, stramlee95} proposed a likelihood ratio test procedure for linear mixed effects models, and identified the limiting distribution of the LRT statistics in some cases. In particular, they suggested that the limiting distribution might in fact be influenced by the presence of correlations between random effects.
Some authors have also proposed finite sample test procedures for variance components testing in linear mixed models, either by deriving the finite sample distribution of test statistics, or using bootstrap and permutation tests.
For example, the finite sample size distribution of the likelihood and restricted likelihood ratio test statistics was studied by \cite{Crai04a} for linear mixed models with one single random effect, and \cite{Grev08} extended these results to linear mixed models with more than one random effect. Several years later, \cite{Qu2013} proposed a procedure based on the score test for testing several variance components in linear mixed models, and \cite{Wood13} studied the finite sample distribution of a test based on the restricted likelihood for testing that one variance is null in generalized linear mixed models. Also in the context of linear mixed models,  
\cite{Sin09} studied a bootstrap test based on the score test for testing several variance components in a generalized linear mixed model, while \cite{Fitz07}, \cite{Sam12} and \cite{Dri13} considered permutation tests for testing several variance components in the context of linear and generalized mixed effects models.

\cite{Mol07} proposed a review of the existing results for testing variance components in mixed effects models, and studied in particular the equivalence between the LRT, the Score test and the Wald test, based on results by \cite{Sil95} or \cite{stramlee94}. They also exhibited the common limiting distribution in some specific cases. However, to the best of our knowledge, there exist no results identifying the limiting distribution of the LRT for general tests on variance components in mixed effects models. 

In this paper, we study the LRT in general mixed effects models, to test that the variances of any subset of the random effects are equal to zero, and identify its asymptotic distribution as a mixture of chi-square distributions. In Section \ref{sec:NLMEmodel}, we present the framework of nonlinear mixed effects models. Section \ref{sec:test} is devoted to the description of the proposed test and its theoretical properties. Practical implementation guidelines are presented in Section \ref{sec:practical}. Experimental results illustrate the performances of the procedure through simulation studies and real datasets analysis in Section \ref{sec:numerical}. The paper ends with some discussion in Section \ref{sec:discussion}. The technical proofs are given in  Appendix.

\section{Nonlinear mixed effects model}
\label{sec:NLMEmodel}
\subsection{Description}\label{sec:defNLME}
We consider the following nonlinear mixed effects model \citep{Dav03,Lavielle2014}:
\begin{align}\label{eq:intravar}
y_{i} = g(\varphi_i,x_{i}) + \varepsilon_{i},
\end{align}
where $y_{i}$ denotes the vector of observations of the $i$-th individual of size $\nbMeasPerInd$, $1 \leq i \leq \nbInd$, $g$ a nonlinear function, $\missing_i$  the vector of random effects of individual $i$, $x_{i}$ a vector of covariates, and $\varepsilon_i$ the random error term.

The vectors of random effects $(\varphi_i)_{1 \leq i \leq \nbInd}$ are assumed independent and identically distributed as follows:
\begin{align}\label{eq:inter_var}
\varphi_i \sim \mathcal{N}_\nbMiss(\meanrandom,\varrandom), \ 1 \leq i \leq \nbInd,
\end{align}
where $\beta$ is a parameter vector in $\mathbb{R}^p$, and $\Gamma$ a covariance matrix of size $p \times p$.

The vectors $(\varepsilon_{i})_{\substack{1 \leq i \leq \nbInd}}$ are assumed independent and identically distributed as follows:
\begin{equation}
\varepsilon_{i} \sim \mathcal{N}(0,\covmatrixerr).
\end{equation}
The sequences $(\varepsilon_{i})$ and $(\missing_i)$ are assumed mutually independent.

Let us denote by $\theta=(\meanrandom,\varrandom, \Sigma)$ the vector of all the model parameters and by $q$ its dimension. Thus, the parameter space is defined as $\Theta = \mathbb{R}^p \times \mathbb{S}^p_+ \times \mathbb{S}^{\nbMeasPerInd}_+$, where $\mathbb{S}^p_+$ is the set of symmetric, positive semi-definite $p \times p$ matrices.

\subsection{Examples}\label{sec:examples}
\subsubsection{Linear mixed effects model}
A special but very common case is the one where the function $g$ is linear. 
 The model can be rewritten in the following usual form \citep{Pin00}:
\begin{align}\label{eq:linearmodel}
y_i & = X_i \beta + Z_i \varphi_i + \varepsilon_i,
\end{align}
where $y_i$ is the observation vector for indivual $i$, $X_i$ and $Z_i$ are matrices of known covariates, $\beta$ is the vector of fixed effects, $\varphi_i$ is the vector of centered random effects for individual $i$, with $\varphi_i \sim \mathcal{N}(\mathbf{0}, \Gamma)$, and $\varepsilon_i$ is a random error term, with $\varepsilon_i \sim \mathcal{N}(\mathbf{0}, \Sigma)$.

\subsubsection{Nonlinear growth curve model}
One famous example of a nonlinear mixed effects model is the logistic growth model, which was studied for example by \cite{Pin00} in their well known example of orange trees growth.

In this model, a logistic curve is used to model the growth of each individual in the population as a nonlinear function of three individual parameters. Denoting by $y_{ij}$ the variable measured for individual $i$ at age $x_j$ (e.g. the trunk circumference in the orange trees example), for each individual $i$ these three parameters are: the asymptotic value of $y_{ij}$, $\varphi_{i1}$, the age at which the individual reaches half its asymptotic value, $\varphi_{i2}$, and the growth scale $\varphi_{i3}$. More precisely, we have:
\begin{eqnarray}
\label{eq:orangegrowth}
  y_{ij}&=&   \frac{\varphi_{i1}}{1+\exp\left(-\frac{x_{j}-\varphi_{i2}}{\varphi_{i3}}\right)}+\varepsilon_{ij}
  \quad 1\leq i \leq \nbInd, \ 1 \leq j \leq J , \\
\end{eqnarray}
where  $\varphi_{i}\sim\mathcal {N}_3(\beta,\Gamma)$, $\varepsilon_{ij}\sim\mathcal {N}(0,\sigma^{2})$ and the $(\varepsilon_{ij})_{i,j}$ are independent.

\section{Variance components testing}\label{sec:test}

\subsection{Description of the testing procedure }\label{subsec:desctest}
Let $r \in \{1, \dots, \nbMiss \}$.
We consider general test hypotheses of the following form, to test the nullity of $r$ variances and of the corresponding covariances:
\begin{equation}\label{eq:hypGeneral}
H_0 : \theta \in \Theta_0 \quad \text{against} \quad H_1 : \theta \in \Theta,
\end{equation}
where $\Theta_0 \subset \Theta$. 

Up to permutations of rows and columns of the covariance matrix $\Gamma$, we can assume that we are testing the nullity of the  last $r$ variances. We write $\Gamma$ in blocks as follows:
\begin{equation*}
\Gamma = \left( \begin{array}{c|c}
\Gamma_1 &  \Gamma_{12}^t \\
\hline
\Gamma_{12} & \Gamma_2
\end{array}
\right),
\end{equation*}
with $\Gamma_1$ a $(p-r) \times (p-r)$ matrix, $\Gamma_2$ a $r \times r$ matrix, $\Gamma_{12}$ a $r \times (p-r)$ matrix and where $A^t$ denotes the transposition of matrix $A$, for any matrix $A$.

The spaces associated to the null and alternative hypotheses are then:
\begin{align}\label{eq:paramSpaces}
	\Theta_0 & = \{\theta \in  \mathbb{R}^q \mid \beta \in \mathbb{R}^p, \Gamma_1 \in \mathbb{S}_+^{p-r}, \Gamma_{12} = 0, \Gamma_2 = 0, \covmatrixerr \in \mathbb{S}^{\nbMeasPerInd}_+ \} \\
	\Theta & = \{\theta \in  \mathbb{R}^q \mid \beta \in \mathbb{R}^p, \Gamma \in \mathbb{S}_+^{p}, \covmatrixerr \in \mathbb{S}^{\nbMeasPerInd}_+ \}.
\end{align}

We emphasize that the parameter space $\Theta$ is not open, and that the tested parameter values are on the boundary of $\Theta$.

We recall below the likelihood ratio test procedure. Let us denote by $y_1^{\nbInd} $ the joint vector of a $\nbInd$-sample $(y_1,\dots,y_{\nbInd})$, and by $L (y_1^{\nbInd} ; \theta)$ the joint likeli\-hood.
We then define the likelihood ratio test statistics by:
\begin{equation}\label{eq:lrtStat}
	LRT_\nbInd  := -2 \ \log \left( \frac{\sup_{\theta \in \Theta_0}  L (y_1^{\nbInd} ;\theta)}{\sup_{\theta \in \Theta}L (y_1^{\nbInd} ; \theta)} \right).
\end{equation}

For a nominal level $0 < \alpha < 1$, the rejection region $R_{\alpha}$  is defined by
\begin{equation}\label{eq:regionrejet}
R_{\alpha} = \{ LRT_N \geq q_{\alpha} \},
\end{equation}
where $q_{\alpha}$ is the $(1-\alpha)$ quantile of the distribution of $LRT_\nbInd$ under the null hypothesis.

However in practice the finite sample distribution of $LRT_N$ is generally untractable in the case of nonlinear mixed effects models. Therefore we focus on its asymptotic distribution.

\subsection[Asymptotic properties]{Asymptotic properties of the likelihood ratio test}\label{sec:properties}
Let us denote by $\theta^*$ the true value of the parameters. We assume that the following conditions are satisfied:

\begin{cond}\label{cond:alternative}
~
\begin{enumerate}
    \item the value $ \theta^*$  is in $\Theta_0$, i.e. $\theta^*$ is of the form $\theta^* = (\beta^*, \Gamma^*, \Sigma^*)$, with $\Gamma^* =   \left( \begin{array}{c|c}
\Gamma_1^* & 0 \\
\hline
0 & 0
\end{array}
\right)$.
\item the matrices $\Gamma_{1}^*$ and $\Sigma^*$ are positive definite
\end{enumerate}
In particular, we assume that the variances that are not being tested are strictly positive.
\end{cond}

To establish the asymptotic distribution of the likelihood ratio test statistics under the null hypothesis, we need to ensure the consistency of the maximum likelihood estimate (MLE). Therefore, we assume that the following general conditions are fulfilled \citep{Sil11}:
\begin{cond}\label{cond:likelihood}
~
\begin{enumerate}
  \item the function $L$ is injective in $\theta$ (to ensure the identifiability of the model),
    \item the first three derivatives of the log-likelihood w.r.t. $\theta$ exist and are bounded by a function whose expectation exists,
    \item the Fisher information matrix is finite and positive definite.
\end{enumerate}
\end{cond}

\begin{rem}
Note that the consistency and asymptotic normality of the MLE models in the specific context of nonlinear mixed effects have been studied in \cite{Nie06,Nie07}. He exhibited specific assumptions that ensure these theoretical results. However, they might be difficult to verify in practice.
\end{rem}

Before stating the expression of the asymptotic distribution of the likelihood ratio test statistics, we recall the definition of the chi-bar-square distribution (for more details, see \cite{Sha85,Sil11}).
\begin{definition}\label{def:chibar}
Let $\mathcal{C}$ be a closed convex cone of $\mathbb{R}^q$, $V$ a positive definite matrix of size $q \times q$ and $Z \sim \mathcal{N}(0,V)$. The distribution of the random variable defined by 
\begin{equation}\label{eq:chibarsquareRV}
\bar{\chi}^2(V,\mathcal{C}) = Z^t V^{-1}Z-\min_{\theta \in \mathcal{C}}(Z-\theta)^t V^{-1}(Z-\theta)
\end{equation}
is called a \emph{chi-bar-square distribution}. It is equal to a mixture of chi-square distributions with different degrees of freedom as follows:
\begin{equation}\label{eq:chibarsquareFDR}
\forall t \geq 0 \ P( \bar{\chi}^2(V,\mathcal{C}) \leq t)= \sum_{i=0}^q w_i(q,V,\mathcal{C})P(\chi_i^2 \leq t),
\end{equation}
where the weights $(w_i(q,V,\mathcal{C}))_{0 \leq i \leq q}$ are non-negative numbers summing up to one, and where $\chi_i^2$ is a random variable following the chi-square distribution with $i$ degrees of freedom, with the convention that $\chi_0^2 \equiv 0$.
\end{definition}

We can now establish the asymptotic distribution of the likelihood ratio test statistics.

\begin{theorem}\label{th:lrt}
Assume that conditions (\ref{cond:alternative}) and (\ref{cond:likelihood})  are fulfilled. We consider the test defined in (\ref{eq:hypGeneral}). We denote by $I_*$ the Fisher information matrix evaluated at the true value $\theta^* \in \Theta_0$. Then:
\begin{gather}\label{eq:lrtdist}
	LRT_\nbInd \xrightarrow[\nbInd  \rightarrow \infty]{d} \bar{\chi}^2(I^{-1}_*,T(\Theta,\theta^*)\cap T(\Theta_0,\theta^*)^{\perp}),
\end{gather}
where $T(\Theta,\theta)$ is the tangent cone to $\Theta$ at $\theta$, and $S^{\perp}$ is the orthogonal complement of $S$, for any subset $S$ of $\mathbb{R}^q$.
\end{theorem}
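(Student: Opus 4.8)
The plan is to place the statement inside the classical theory of likelihood ratio tests under cone-shaped constraints \citep{Cher54,Self87,Gey94,Sil11}, in three moves: first localise the two maximisations in \eqref{eq:lrtStat} around $\theta^*$ and replace the log-likelihood by its local quadratic approximation; second replace the parameter sets $\Theta$ and $\Theta_0$ by their tangent cones at $\theta^*$; third recognise the resulting functional of a Gaussian vector as the announced chi-bar-square. The first move is routine once Conditions~\ref{cond:alternative}--\ref{cond:likelihood} are in force; the substance of the argument is the tangent-cone reduction, followed by a short piece of linear algebra.

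\emph{Localisation and quadratic expansion.} By Condition~\ref{cond:likelihood}(1)--(2) the usual consistency argument for the maximum likelihood estimator applies over both $\Theta_0$ and $\Theta$, so the two suprema in \eqref{eq:lrtStat} are, with probability tending to one, attained on an arbitrarily small neighbourhood of $\theta^*$, and the maximisers are $O_P(\nbInd^{-1/2})$-close to $\theta^*$. Writing $\ell_\nbInd(\theta)=\log L(y_1^\nbInd;\theta)$, $u=\sqrt{\nbInd}(\theta-\theta^*)$ and $W_\nbInd=\nbInd^{-1/2}\nabla\ell_\nbInd(\theta^*)$, a third-order Taylor expansion whose remainder is controlled uniformly on $\sqrt{\nbInd}$-neighbourhoods by the integrable envelope of Condition~\ref{cond:likelihood}(2), together with the central limit theorem applied to the i.i.d.\ individual score contributions (so that $W_\nbInd\xrightarrow{d}\mathcal N(0,I_*)$), gives, uniformly for $u$ in compact sets,
\begin{equation*}
2\bigl(\ell_\nbInd(\theta)-\ell_\nbInd(\theta^*)\bigr)=Z_\nbInd^{t}I_*Z_\nbInd-(u-Z_\nbInd)^{t}I_*(u-Z_\nbInd)+o_P(1),\qquad Z_\nbInd:=I_*^{-1}W_\nbInd\xrightarrow{d}Z\sim\mathcal N(0,I_*^{-1}),
\end{equation*}
the inversion being licit by Condition~\ref{cond:likelihood}(3). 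Taking suprema over $\Theta$ and $\Theta_0$ and subtracting, the common term $Z_\nbInd^{t}I_*Z_\nbInd$ cancels and
\begin{equation*}
LRT_\nbInd=\min_{u\in\sqrt{\nbInd}(\Theta_0-\theta^*)}(u-Z_\nbInd)^{t}I_*(u-Z_\nbInd)-\min_{u\in\sqrt{\nbInd}(\Theta-\theta^*)}(u-Z_\nbInd)^{t}I_*(u-Z_\nbInd)+o_P(1).
\end{equation*}

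\emph{Passage to the tangent cones and identification.} Here the geometry of $\Theta$ enters. Since $\Theta=\mathbb R^p\times\mathbb S_+^p\times\mathbb S_+^{\nbMeasPerInd}$ is closed and convex, it is approximated at $\theta^*$ by its tangent cone in the sense needed for this reduction (automatic for convex sets), and likewise $\Theta_0$. By Condition~\ref{cond:alternative}(2), $\Gamma_1^*$ and $\covmatrixerr^*$ are interior to $\mathbb S_+^{p-r}$ and $\mathbb S_+^{\nbMeasPerInd}$, so no constraint is active along the $\meanrandom,\Gamma_1,\covmatrixerr$ coordinates: $T(\Theta_0,\theta^*)$ is a \emph{linear subspace}, while the only active constraint entering $T(\Theta,\theta^*)$ is positive semi-definiteness of the lower-right $r\times r$ block of the perturbation of $\Gamma$ at $\Gamma_2^*=0$ (the $\Gamma_{12}$ directions remaining free), so $T(\Theta,\theta^*)$ is a genuine, non-linear, closed convex cone. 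Using the stability of projections onto convex sets that converge to their tangent cones (see \cite{Gey94,Sil11}), one replaces in the last display $\sqrt{\nbInd}(\Theta-\theta^*)$, $\sqrt{\nbInd}(\Theta_0-\theta^*)$ and $Z_\nbInd$ by $T(\Theta,\theta^*)$, $T(\Theta_0,\theta^*)$ and $Z$. Finally, $T(\Theta_0,\theta^*)\subseteq T(\Theta,\theta^*)$ and a convex cone containing a linear subspace $L$ splits as $L\oplus\bigl(T(\Theta,\theta^*)\cap L^{\perp}\bigr)$, the orthogonality taken in the inner product $\langle x,y\rangle=x^{t}I_*y$ (which we take as the meaning of $\perp$ in the statement); decomposing $Z=Z_0+Z_1$ accordingly, the first minimum equals $Z_1^{t}I_*Z_1$, the second equals $\min_{u\in\mathcal C}(u-Z_1)^{t}I_*(u-Z_1)$ with $\mathcal C=T(\Theta,\theta^*)\cap T(\Theta_0,\theta^*)^{\perp}$, and since $\langle Z_0,u-Z_1\rangle=0$ for $u\in\mathcal C$ their difference is $Z^{t}I_*Z-\min_{u\in\mathcal C}(Z-u)^{t}I_*(Z-u)$, i.e.\ $\bar\chi^2(I_*^{-1},\mathcal C)$ by Definition~\ref{def:chibar} since $Z\sim\mathcal N(0,I_*^{-1})$; this is \eqref{eq:lrtdist}, and the mixture form \eqref{eq:chibarsquareFDR} then follows from the definition.

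The main obstacle is the second move: one must justify rigorously, and uniformly on $\sqrt{\nbInd}$-neighbourhoods, that the rescaled constraint sets and the locations of the constrained and unconstrained maximisers are well approximated by the tangent cones, and that the quadratic remainder of the first move is negligible there. Condition~\ref{cond:likelihood} (finite positive definite Fisher information, integrable third-derivative envelope) and the convexity of $\Theta$ are exactly what make this legitimate: convexity is what allows the tangent-cone approximation even though $\theta^*$ lies on $\partial\Theta$, whereas a non-convex constraint set would break this step. It is also worth noting that $\mathcal C$, and hence the weights in \eqref{eq:chibarsquareFDR}, depend on the off-diagonal block $\Gamma_{12}$ and on $I_*$ — therefore on correlations between the random effects — but not on whether $g$ is linear, since $g$ affects $I_*$ only, not the geometry of $\Theta$ or $\Theta_0$.
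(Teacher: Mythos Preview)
Your proof is correct and follows essentially the same three-step route as the paper's own argument: local quadratic expansion of the log-likelihood, replacement of $\Theta_0$ and $\Theta$ by their tangent cones at $\theta^*$, and identification of the resulting difference of squared projections as a chi-bar-square. The only differences are cosmetic: you justify the tangent-cone approximation via convexity of $\Theta$ and $\Theta_0$ whereas the paper argues that they are semi-algebraic and hence Chernoff-regular (citing \cite{Drton09}), and you carry out the final orthogonal decomposition $T(\Theta,\theta^*)=T(\Theta_0,\theta^*)\oplus\mathcal C$ explicitly rather than invoking Theorem~3.7.1 of \cite{Sil11}.
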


The proof is adapted from the proof of Proposition 4.8.2. in \cite{Sil11}. We present below a sketch of the proof, while its detailed description is postponed to the Appendix.
We first prove the consistency of the maximum likelihood estimates on $\Theta_0$ and $\Theta$. Then, we define the tangent cones to $\Theta_0$ and $\Theta$ at the true value $\theta^*$, and substitute $\Theta_0$ and $\Theta$ for their tangent cones. The MLE are still consistent on the tangent cones. Under some regularity conditions, we derive quadratic expansions of the log-likelihood around $\theta^*$. Finally, we prove that the asymptotic distribution of the LRT statistics defined in \eqref{eq:lrtStat} is the same as the asymptotic distribution of the LRT statistics when testing $\theta \in T(\Theta_0,\theta^*)$ against $\theta \in T(\Theta,\theta^*)$, based on a single observation of $Z \sim \mathcal{N}(\theta^*,I^{-1}_*)$. In other words, we are reduced to a test on the mean of a multivariate normal distribution.
Finally, since $T(\Theta_0,\theta^*$) is a linear space which is included in $T(\Theta,\theta^*)$, a closed convex cone, it follows using Theorem 3.7.1 in \cite{Sil11}, that the asymptotic distribution is a chi-bar-square distribution.

\bigskip
In the context of our study, the cone $T(\Theta,\theta^*)\cap T(\Theta_0,\theta^*)^{\perp}$ always admits an analytical expression.
The following proposition details this expression for $p$-dimensional random effects with non-correlated components or with fully-correlated components. These common cases correspond to a parameter space $\Theta$ involving covariance matrices $\Gamma$ which are either diagonal (with dimension of the parameter space $q=2p+\frac{J(J+1)}{2}$) or full ($q=p+\frac{p(p+1)}{2}+\frac{J(J+1)}{2}$). 

\begin{prop}\label{prop:cone}
$\quad$
\begin{enumerate}
	\item Assume that $\Theta = \{\theta \in  \mathbb{R}^q \mid \beta \in \mathbb{R}^p, \Gamma \in \mathbb{S}_+^{p}, \Gamma \ \text{full}, \covmatrixerr \in \mathbb{S}^{\nbMeasPerInd}_+ \}$. Then $$T(\Theta,\theta^*)\cap T(\Theta_0,\theta^*)^{\perp} = \{0\}^p  \times \{0\}^\frac{(p-r)(p-r+1)}{2}  \times \mathbb{R}^{r(p-r)} \times \mathbb{S}_+^{r}  \times \{0\}^{\frac{J(J+1)}{2}}.$$
	\item Assume that $\Theta = \{\theta \in  \mathbb{R}^q \mid \beta \in \mathbb{R}^p, \Gamma \in \mathbb{S}_+^{p}, \Gamma \ \text{diagonal}, 
\covmatrixerr \in \mathbb{S}^{\nbMeasPerInd}_+  \}$. Then $$T(\Theta,\theta^*)\cap T(\Theta_0,\theta^*)^{\perp} = \{0\}^p  \times \{0\}^{p-r} \times \mathbb{R}_+^{r}  \times \{0\}^{\frac{J(J+1)}{2}}.$$
\end{enumerate}
\end{prop}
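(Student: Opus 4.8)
The plan is to compute the tangent cone $T(\Theta,\theta^*)$ and the tangent space $T(\Theta_0,\theta^*)$ explicitly in each of the two cases, then intersect the former with the orthogonal complement of the latter. Throughout I identify a neighbourhood of $\theta^*$ in parameter space with a subset of $\mathbb{R}^q$ by listing coordinates in the order $(\beta, \text{free entries of }\Gamma, \text{free entries of }\Sigma)$, and I use that all the action happens in the $\Gamma$-block: since $\beta$ ranges over the open set $\mathbb{R}^p$ and, by Condition~\ref{cond:alternative}(2), $\Sigma^*$ is in the interior of $\mathbb{S}^J_+$, the tangent cones to $\Theta$ and to $\Theta_0$ both contain the full $\beta$- and $\Sigma$-directions, so these directions lie in $T(\Theta_0,\theta^*)$ and hence are killed by the intersection with $T(\Theta_0,\theta^*)^\perp$. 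This explains the $\{0\}^p$ and $\{0\}^{J(J+1)/2}$ factors in both formulas, and reduces everything to understanding the tangent cone to the relevant set of covariance matrices at $\Gamma^* = \mathrm{diag}(\Gamma_1^*,0)$.

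For the full case, I would first recall the standard fact (see e.g. the tangent-cone computations in \cite{Sil11}) that for a symmetric matrix $\Gamma^*$ of the block form $\left(\begin{smallmatrix}\Gamma_1^* & 0\\ 0& 0\end{smallmatrix}\right)$ with $\Gamma_1^*$ positive definite, the tangent cone to $\mathbb{S}^p_+$ at $\Gamma^*$ is the set of symmetric matrices $H = \left(\begin{smallmatrix}H_1 & H_{12}^t\\ H_{12}& H_2\end{smallmatrix}\right)$ such that $H_2 \in \mathbb{S}^r_+$, with no constraint on $H_1$ or $H_{12}$; the point is that to first order one can move the block corresponding to the already-positive-definite part in any symmetric direction, one can move the off-diagonal block freely, but the block that is currently zero can only be pushed into the positive semidefinite cone. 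Hence $T(\Theta,\theta^*)$ corresponds to $\mathbb{R}^p\times \mathbb{S}^{p-r}\times \mathbb{R}^{r(p-r)}\times \mathbb{S}^r_+\times \mathbb{S}^J$ (with $\mathbb{S}^{p-r}$, $\mathbb{S}^J$ denoting the full symmetric-matrix directions). Similarly $T(\Theta_0,\theta^*)$ is the linear space $\mathbb{R}^p\times\mathbb{S}^{p-r}\times\{0\}^{r(p-r)}\times\{0\}^{r(r+1)/2}\times\mathbb{S}^J$. Taking the orthogonal complement of this linear space (with respect to the standard inner product on the coordinate vector, which is why the statement is stated coordinate-wise and not intrinsically) zeroes out the $\beta$, $H_1$ and $\Sigma$ blocks and leaves all of $\mathbb{R}^{r(p-r)}\times\mathbb{S}^r$; intersecting with $T(\Theta,\theta^*)$ reinstates the constraint $H_2\in\mathbb{S}^r_+$, giving exactly $\{0\}^p\times\{0\}^{(p-r)(p-r+1)/2}\times\mathbb{R}^{r(p-r)}\times\mathbb{S}^r_+\times\{0\}^{J(J+1)/2}$.

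For the diagonal case the same argument is simpler: the parameter space only retains the $p$ diagonal entries of $\Gamma$, of which the first $p-r$ are at strictly positive values (free directions) and the last $r$ are at zero (one-sided directions, each contributing a $\mathbb{R}_+$ factor). So $T(\Theta,\theta^*)$ is $\mathbb{R}^p\times\mathbb{R}^{p-r}\times\mathbb{R}_+^r\times\mathbb{S}^J$ and $T(\Theta_0,\theta^*)$ is the linear span $\mathbb{R}^p\times\mathbb{R}^{p-r}\times\{0\}^r\times\mathbb{S}^J$; orthogonal complement then intersection yields $\{0\}^p\times\{0\}^{p-r}\times\mathbb{R}_+^r\times\{0\}^{J(J+1)/2}$, as claimed.

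I expect the main obstacle to be a careful justification of the tangent-cone formula for $\mathbb{S}^p_+$ at a boundary point of the stated block form — in particular, showing that arbitrary (not necessarily small or positive) perturbations of the off-diagonal block $\Gamma_{12}$ are admissible tangent directions despite the fact that a nonzero $\Gamma_{12}$ with $\Gamma_2=0$ destroys positive semidefiniteness at first order. The resolution is that one must simultaneously perturb $\Gamma_2$ at order $t$ (staying in $\mathbb{S}^r_+$) while perturbing $\Gamma_{12}$ at order $t$: a curve such as $t\mapsto \left(\begin{smallmatrix}\Gamma_1^* + tH_1 & tH_{12}^t\\ tH_{12} & tH_2\end{smallmatrix}\right)$ with $H_2 \succ 0$ stays in $\mathbb{S}^p_+$ for small $t>0$ by a Schur-complement check ($tH_2 - t^2 H_{12}(\Gamma_1^*+tH_1)^{-1}H_{12}^t \succeq 0$ for $t$ small since the leading term is linear in $t$ and positive definite), so its one-sided derivative $H$ is a genuine element of the tangent cone; and conversely any tangent direction must have $H_2\in\mathbb{S}^r_+$ because the $(2,2)$ block of any admissible curve stays in $\mathbb{S}^r_+$. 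Once this lemma is in place, everything else is bookkeeping with orthogonal complements in coordinates, and I would relegate the verification to a short paragraph citing the analogous computation in \cite{Sil11}.
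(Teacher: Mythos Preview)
Your proposal is correct and follows essentially the same route as the paper: compute $T(\Theta,\theta^*)$ and $T(\Theta_0,\theta^*)$ factor by factor using the product structure, identify the tangent cone to $\mathbb{S}^p_+$ at the block-diagonal point $\Gamma^*$, and then take the orthogonal complement and intersect. The only noteworthy difference is in how the key lemma on $T(\mathbb{S}^p_+,\Gamma^*)$ is justified: the paper invokes the general formula $T_A(\mathbb{S}^p_+)=\{M\in\mathbb{S}^p:\langle Mu,u\rangle\ge 0\ \forall u\in\ker A\}$ from \cite{Hir12}, whereas you argue directly via a Schur-complement curve, which is a perfectly valid and somewhat more self-contained alternative (though you should add a one-line closure remark to pass from $H_2\succ 0$ to $H_2\succeq 0$).
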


The proof of Proposition \ref{prop:cone} relies on technical elements from convex analysis (see \cite{Hir12}) and is postponed to the Appendix. Results for covariance matrices of general structures can be easily proven using similar tools. 
For example, one can consider a block-diagonal structure for $\Gamma$. In particular, when dealing with mechanistic models each effect has a physical interpretation, and we can identify sub-groups of correlated random effects. In this case, up to a permutation of rows and columns, the covariance matrix can be written as $\Gamma = \diag(\Gamma_1,\dots,\Gamma_K)$, where, for $k=1,\dots,K$, $\Gamma_k$ is a full covariance matrix of size $r_k \times r_k$, associated with the $k$-th sub-group of correlated random effects. Let us now assume that we want to test that the $K$-th block of variances is null. Then, it can similarly be shown that 
$$T(\Theta,\theta^*)\cap T(\Theta_0,\theta^*)^{\perp} = \{0\}^p  \times \left(\bigotimes_{k=1}^{K-1} \{0\}^{r_k(r_k+1)/2} \right) \times \mathbb{S}_+^{r_K}  \times \{0\}^{\frac{J(J+1)}{2}}.$$

Moreover, thanks to the expressions of the cone $T(\Theta,\theta^*)\cap T(\Theta_0,\theta^*)^{\perp}$ established in Proposition \ref{prop:cone}, we can deduce that several weights involved in the chi-bar-square distribution defined in \eqref{eq:lrtdist} are equal to 0. The following corollary details this result for the two cases described in Proposition \ref{prop:cone}.

\begin{cor}
$\quad$
\begin{enumerate}
	\item Assume that $\Theta = \{\theta \in  \mathbb{R}^q \mid \beta \in \mathbb{R}^p, \Gamma \in \mathbb{S}_+^{p}, \Gamma \ \text{full}, 
\covmatrixerr \in \mathbb{S}^{\nbMeasPerInd}_+  \}$. 
	Then the distribution of the random variable $\bar{\chi}^2(I^{-1}_*,T(\Theta,\theta^*)\cap T(\Theta_0,\theta^*)^{\perp})$ is a mixture of $\left(\frac{r(r+1)}{2} + 1\right)$ chi-square distributions with degrees of freedom between $r(p-r)$ and $\left(r(p-r) + \frac{r(r+1)}{2}\right)$.
	\item Assume that $\Theta = \{\theta \in  \mathbb{R}^q \mid \beta \in \mathbb{R}^p, \Gamma \in \mathbb{S}_+^{p}, \Gamma \ \text{diagonal}, 
\covmatrixerr \in \mathbb{S}^{\nbMeasPerInd}_+  \}$. Then the distribution of the random variable $\bar{\chi}^2(I^{-1}_*,T(\Theta,\theta^*)\cap T(\Theta_0,\theta^*)^{\perp})$ is a mixture of $(r+1)$ chi-square distributions with degrees of freedom  between 0 and $r$.
\end{enumerate}
\end{cor}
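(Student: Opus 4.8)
The plan is to derive the corollary directly from the explicit form of the cone $\mathcal{C} := T(\Theta,\theta^*)\cap T(\Theta_0,\theta^*)^{\perp}$ supplied by Proposition \ref{prop:cone}, combined with the standard facial description of the weights of a chi-bar-square distribution. First I would recall, following \cite{Sha85,Sil11}, that for a closed convex cone $\mathcal{C}\subset\mathbb{R}^q$ and a positive definite matrix $V$ — here $V=I_*^{-1}$, which is positive definite by Condition \ref{cond:likelihood} — the weight $w_i(q,V,\mathcal{C})$ appearing in Definition \ref{def:chibar} is the probability that the $V^{-1}$-orthogonal projection of $Z\sim\mathcal{N}(0,V)$ onto $\mathcal{C}$ lies in the relative interior of a face of $\mathcal{C}$ of dimension $i$. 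The elementary fact I would then invoke is that every nonempty face of a closed convex cone contains the cone's lineality space $\mathcal{L}$ (its largest linear subspace) and is contained in its linear span $\mathcal{M}$; consequently each face has dimension between $\dim\mathcal{L}$ and $\dim\mathcal{M}$, so $w_i(q,V,\mathcal{C})=0$ whenever $i<\dim\mathcal{L}$ or $i>\dim\mathcal{M}$. Thus $\bar{\chi}^2(V,\mathcal{C})$ is a mixture of at most $\dim\mathcal{M}-\dim\mathcal{L}+1$ chi-square distributions with degrees of freedom between $\dim\mathcal{L}$ and $\dim\mathcal{M}$.

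It would then remain only to read off $\dim\mathcal{L}$ and $\dim\mathcal{M}$ from Proposition \ref{prop:cone}. In the full covariance case, $\mathcal{C} = \{0\}^p \times \{0\}^{\frac{(p-r)(p-r+1)}{2}} \times \mathbb{R}^{r(p-r)} \times \mathbb{S}_+^{r} \times \{0\}^{\frac{J(J+1)}{2}}$; since $A\succeq 0$ and $-A\succeq 0$ force $A=0$, the factor $\mathbb{S}_+^{r}$ contains no line, so the lineality space is obtained by replacing that factor with $\{0\}$, giving $\dim\mathcal{L}=r(p-r)$, while the span is obtained by replacing $\mathbb{S}_+^{r}$ with the whole space $\mathbb{S}^{r}$ of symmetric $r\times r$ matrices, giving $\dim\mathcal{M}=r(p-r)+\frac{r(r+1)}{2}$. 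In the diagonal case, $\mathcal{C}=\{0\}^p\times\{0\}^{p-r}\times\mathbb{R}_+^{r}\times\{0\}^{\frac{J(J+1)}{2}}$; the cone $\mathbb{R}_+^{r}$ contains no line, so $\dim\mathcal{L}=0$, and its span is $\mathbb{R}^{r}$, so $\dim\mathcal{M}=r$. Substituting these values into the bound above gives precisely the two assertions. I would also remark that in the diagonal case the bound is sharp — all $r+1$ weights are positive, since $\mathbb{R}_+^{r}$ has a face of each dimension $0,\dots,r$ and a nondegenerate Gaussian projection reaches each with positive probability — whereas in the full covariance case the count $\frac{r(r+1)}{2}+1$ is only an upper bound on the number of distinct components, because the faces of $\mathbb{S}_+^{r}$ have dimensions $\frac{t(t+1)}{2}$, $0\le t\le r$; this refinement, however, is not needed for the statement.

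I expect the only genuinely substantive step to be the first one: justifying the facial interpretation of the chi-bar-square weights — or, equivalently, the decomposition $\mathcal{C}=\mathcal{L}\oplus\mathcal{C}'$ into a linear part and a pointed cone and the resulting representation of $\bar{\chi}^2(V,\mathcal{C})$ as an independent sum of a $\chi^2_{\dim\mathcal{L}}$ and a chi-bar-square distribution associated with $\mathcal{C}'$, which spans a space of dimension $\dim\mathcal{M}-\dim\mathcal{L}$ — together with the observation that every nonempty face of a closed convex cone contains its lineality space. These are standard facts from convex analysis and from the theory of order-restricted inference (\cite{Sil11,Hir12}); once they are in place, the corollary reduces to the dimension count carried out above, and nothing else is required.
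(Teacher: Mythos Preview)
Your argument is correct and leads to the same conclusion, but it is organized differently from the paper's proof. The paper does not invoke the facial description of the weights; instead it routes through the polar cone, using the duality relation $w_i(q,V,\mathcal{C})=w_{q-i}(q,V,\mathcal{C}^o)$ together with two vanishing criteria from \cite{Sha85,Sha88}: (a) if $\mathcal{C}$ is contained in a linear subspace of dimension $q-k$ then $w_0(\mathcal{C}^o),\dots,w_{k-1}(\mathcal{C}^o)$ vanish, and (b) if $\mathcal{C}$ contains a linear subspace of dimension $l$ then $w_{q-l+1}(\mathcal{C}^o),\dots,w_q(\mathcal{C}^o)$ vanish. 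Applying (a), (b) and the duality to the explicit cones of Proposition~\ref{prop:cone} yields exactly your bounds $\dim\mathcal{L}$ and $\dim\mathcal{M}$. Your lineality-space/span formulation is the direct translation of these polar-cone statements and is arguably cleaner, since it avoids introducing $\mathcal{C}^o$ altogether; the paper's version has the mild advantage that it quotes the three properties verbatim from Shapiro's work without needing the face interpretation. Your closing remark --- that in the full case the faces of $\mathbb{S}_+^{r}$ have dimensions $\tfrac{t(t+1)}{2}$, $0\le t\le r$, so at most $r+1$ of the $\tfrac{r(r+1)}{2}+1$ weights are actually positive --- is a correct sharpening that the paper does not mention.
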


\begin{proof}
Let $V$ be a positive-definite matrix and $\mathcal{C}$ a closed convex cone of $\mathbb{R}^q$. We denote by $\mathcal{C}^o = \{x \in \mathbb{R}^q \mid x^t y \leq 0, \ \forall y \in \mathcal{C} \}$ its polar cone.
We recall the following properties for the weights of the chi-bar-square distribution $\bar{\chi}^2(V,\mathcal{C})$ \citep{Sha85,Sha88}: 
\begin{enumerate}
	\item[(i)] for $0 \leq i \leq q$, $w_i(q,V,\mathcal{C}) = w_{q-i}(q,V,\mathcal{C}^o)$.
	\item[(ii)] if $\mathcal{C}$ is included in a linear space of dimension $(q-k)$, for $1 \leq k \leq q$, then the first $k$ weights $\{w_i(q,V ,\mathcal{C}^o), i=0, \dots, k-1\}$ are zero,
	\item[(iii)] if $\mathcal{C}$ contains a linear space of dimension $l$, for $1 \leq l \leq q$, then the last $l$ weights $\{w_i(q,V,\mathcal{C}^o), i=q-l+1, \dots, q\}$ are zero.
\end{enumerate}

In our case, $\mathcal{C} = T(\Theta,\theta^*)\cap T(\Theta_0,\theta^*)^{\perp}$ and $V = I_*^{-1}$, and we have for both cases mentioned in the corollary:
\begin{enumerate}
	\item $\mathcal{C} = \{0\}^p  \times \{0\}^\frac{(p-r)(p-r+1)}{2}  \times \mathbb{R}^{r(p-r)} \times \mathbb{S}_+^{r}  \times \{0\}^{\frac{J(J+1)}{2}}$ which is included in $\mathbb{R}^{r(p-r) + \frac{r(r+1)}{2}}$, i.e. in a linear space of dimension $q - \left(p + \frac{(p-r)(p-r+1)}{2} + \frac{J(J+1)}{2}\right)$. Therefore, using properties (i) and (ii) above, the weights $w_i(q,I^{-1}_*,\mathcal{C})$, for $i= r(p-r) + \frac{r(r+1)}{2}+ 1, \dots, q$ are zero. Moreover, $\mathcal{C}$ contains $\mathbb{R}^{r(p-r)}$, i.e. a linear space of dimension $r(p-r)$, which means using properties (i) and (iii) above, that the weights $w_i(q,I^{-1}_*,\mathcal{C})$, for $i=0, \dots, r(p-r)-1$ are zero.
	\item $\mathcal{C} = \{0\}^p  \times \{0\}^{p-r} \times \mathbb{R}_+^{r}  \times \{0\}^{\frac{J(J+1)}{2}}$ which is included in $\mathbb{R}^r$, a linear space of dimension $q-p-(p-r) - \frac{J(J+1)}{2}$. It follows that the weights $w_i(q,I^{-1}_*,\mathcal{C})$, for $i= r+1, \dots, q$ are zero. Then, since $\mathcal{C}$ does not contain any linear space of dimension $k>0$, all the other weigths are non-zero. 
\end{enumerate}
\end{proof}

\begin{rem}
The theoretical results above extend in a natural way to  the model defined in Section \ref{sec:NLMEmodel}. For example, it is possible to consider covariates depending on the individual in the model, leading to non identically distributed observations, provided that additional suitable assumptions are fulfilled \citep{Silv94}. It is also possible to consider more general models for the random effects and the error term.
\end{rem}

\section{Practical implementation}\label{sec:practical}

\subsection{Computation of the likelihood ratio test statistic}\label{sec:lrt}

The computation of the likelihood ratio test requires the computation of the maximum likelihood values under the null and alternative hypotheses, denoted respectively by $\hat{\theta}_0$ and $\hat{\theta}_1$, as well as the values of the likelihood at these two points, $L(y_1^\nbInd;\hat{\theta}_0)$ and $L(y_1^\nbInd;\hat{\theta}_1)$. 

However, in the context of nonlinear mixed effects models, the likelihood is not available in a closed form, and we need to resort to stochastic variants of the Expectation-Maximization (EM) algorithm, such as the Stochastic Approximation EM algorithm for example \citep{Kuh05}, to compute $\hat{\theta}_0$ and $\hat{\theta}_1$. For the same reason, $L(y_1^\nbInd;\hat{\theta}_0)$ and $L(y_1^\nbInd;\hat{\theta}_1)$ cannot be computed explicitly, and should be approximated using appropriate methods such as numerical or stochastic integration.

Since the decision to reject the null hypothesis relies on the value of the test statistics, the approximation of $L(y_1^\nbInd;\theta)$ must be computed precisely.
Let us denote by $L(\obs_i;\theta)$ the marginal likelihood of the $i$-th individual, and by $\ell(\obs_1^\nbInd;\theta)$ the joint log-likelihood. Then:
\begin{equation}\label{eq:lik}
 \ell(\obs_1^\nbInd;\theta)  = \log \left( \prod_{i=1}^\nbInd L(y_i;\theta) \right)  =  \sum_{i=1}^\nbInd \log \left(\int_{\mathbb{R}^\nbMiss} f(\obs_i \mid \missing_i ; \theta) p(\missing_i;\theta) d\missing_i \right),
\end{equation}
where $f(\cdot \mid \missing_i ; \theta)$ is the conditionnal probability density function  of $y_i$ given the random effect $ \varphi_i$, and $p(\cdot;\theta)$ is the probability density function of the random effect $\varphi_i$. This quantity can be approximated using classical methods for integral approximations. However in the case of high dimensional random effect, stochastic integration is preferred over numerical approximations, allowing a better approximation in the case of comparable computation times.

In practice, each $L(y_i;\theta)$ can be approximated independently of the others using Monte Carlo methods, and this can be done in parallel to optimize the execution time.
Computing the joint log-likelihood using the sum of the marginal log-likelihoods, instead of taking the logarithm of their product can also result in less numerical issues. %

To further reduce the variability of the LRT statistics estimate, we can compute directly the Monte Carlo estimate of $LRT_N$ based on the same sample size $M$ rather than using two estimates of $\ell(y_1^\nbInd;\hat{\theta}_0)$ and $\ell(y_1^\nbInd; \hat{\theta}_1)$ based on two different samples. Thus, let us consider:
\begin{equation}\label{eq:loglikIS}
\hat{LRT}_{N,M} = -2 \ \sum_{i=1}^\nbInd \log \frac{\sum_{m=1}^{M} f(\obs_i \mid \missing_{i,0}^m ; \hat{\theta}_0)}{\sum_{m=1}^{M} f(\obs_i \mid \missing_{i,1}^m ; \hat{\theta}_1) },
\end{equation}
where $\missing_{i,0}^m = \hat{\beta}_0 + \hat{\Gamma}_0^{1/2} Z_i^m$, $\missing_{i,1}^m = \hat{\beta}_1 + \hat{\Gamma}_1^{1/2} Z_i^m$ and $Z_i^m \sim \mathcal{N}(0,I_p)$.

The sample size $M$ of the Monte Carlo algorithm should be chosen large enough in order to ensure that the variance of the final estimate $\hat{LRT}_{N,M}$ is below a chosen threshold. 

\subsection{Computation of chi-bar-square weights when $\Gamma$ is diagonal}\label{sec:weights}
In general, the weights involved in the definition of the chi-bar-square distribution defined in equation \eqref{eq:chibarsquareFDR} are not available in a tractable form. However, in the special case of a diagonal covariance matrix, the cone $\mathcal{C}$ involved in the chi-bar-square distribution is polyhedral of dimension $r$ (see Proposition \ref{prop:cone}). Indeed, in this case the cone can be written as $\mathcal{C} = \{ \theta \in\mathbb{R}^q \mid R\theta \geq 0\}$, with $R= \left(\mathbf{0}_{r \times (p+p-r) } \mid I_r  \mid \mathbf{0}_{r \times \frac{J(J+1)}{2}}\right)$, a full-rank matrix of dimension $r \times q$.
For polyhedral cones of this type, \cite{Sha85} provided the exact weights expressions for $1 \leq r \leq 3$. 
The case $r =2$ is also treated by \cite{Self87}.

Following the notation of \cite{Sha85}, we denote by $\rho_{ij}= v_{ij}/(v_{ii} v_{jj})^{1/2}$ and $\rho_{ij.k} = (\rho_{ij} - \rho_{ik} \rho_{jk})/((1-\rho_{ik}^2)(1-\rho_{jk}^2))^{1/2}$, respectively the correlation coefficient, and the partial correlation coefficient associated with the covariance matrix $RI_*^{-1}R^t$, where $v_{ij}$ stands for the element in row $i$ and column $j$ of matrix $RI_*^{-1}R^t$. Using Proposition 3.6.1 of \cite{Sil11}, we have $w_{i}(q,I_*^{-1},\mathcal{C}) = w_i(r,RI^{-1}_*R^t,\mathbb{R}_+^r)$ and denoting by $w_{i,r} = w_i(r,RI^{-1}_*R^t,\mathbb{R}_+^r)$, we have the following expressions:

\begin{itemize}[leftmargin=*]
\item For $r=1$, we get $w_{0,1} = w_{1,1} = 1/2$.
\item For $r=2$, we have:  $w_{0,2} = 1/2 \ \pi^{-1} \cos^{-1}(\rho_{12})$, $w_{1,2} = 1/2$, and $w_{2,2} = 1/2 - 1/2 \ \pi^{-1} \cos^{-1}(\rho_{12})$
\item For $r=3$, we have: $w_{3,3} = (4\pi)^{-1} (2\pi - \cos^{-1}(\rho_{12}) - \cos^{-1}(\rho_{13}) - \cos^{-1}(\rho_{23}))$, $w_{2,3} = (4\pi)^{-1} (3\pi - \cos^{-1}(\rho_{12.3}) - \cos^{-1}(\rho_{13.2}) - \cos^{-1}(\rho_{23.1}))$, $w_{1,3} = 1/2 - w_{3,3}$, and $w_{0,3} = 1/2 - w_{2,3}$.
\end{itemize}

For $r>3$, and in more general settings, e.g. when $\mathcal{C}$ is not a polyhedral cone, one has to either approximate the weights through numerical integration or Monte Carlo simulations, or to directly compute the tail probability of the chi-bar-square distribution (see \cite{Sil11}, page 78). 
For spherical and polyhedral cones, \cite{Del07} proposed an elegant method based on Rice's formula.

\section{Experiments}\label{sec:numerical}

\subsection{Simulation study}

\subsubsection{Simulation settings and practical implementation}\label{sec:simsettings}
Let us consider the general mixed effects model presented in Section \ref{sec:defNLME}, for a set of observations $y_{ij}$, $i=1, \dots, \nbInd$, $j=1, \ldots J$ and random effects $\varphi_i $:
\begin{equation}\label{eq:orangeModel}
y_{ij} = g(\varphi_i,x_j) + \varepsilon_{ij} \ ,
\end{equation}
where  $\varphi_{i}\sim\mathcal {N}(\beta,\Gamma)$, $\varepsilon_{ij}\sim\mathcal {N}(0,\sigma^{2})$ and the $(\varepsilon_{ij})$ are independent. 

We consider a linear mixed effects model and the logistic mixed effects model described in Section \ref{sec:examples} with two and three random effects  to explore different settings. We denote by $\theta^*$ the true parameter value used to generate the data under $H_0$. \\
 Let us denote by $\mathcal{M}_1$ the linear model with three random effects where we set $g(\varphi_i,x_j) = \beta_1 + \varphi_{i1} + (\beta_2 + \varphi_{i2})x_j + (\beta_3 + \varphi_{i3}) x_j^2$. We choose $\beta^* = (0, 7, 2)^t$, $\gamma_1^* = 1.3$, $\gamma_2^* = 1$ and $\gamma_{12}^*= 1.04$, corresponding to a correlation coefficient of 0.8 between $\varphi_{i1}$ and $\varphi_{i2}$. We consider the null hypothesis $H_0$ defined by $\gamma_{3}^*=\gamma_{13}^*=\gamma_{23}^*=0$. In the sub-model with two random effects, we set $g(\varphi_i,x_j) = \beta_1 + \varphi_{i1} + (\beta_2 + \varphi_{i2}) x_j $, $\beta^* = (0, 7)^t$ and $\gamma_1^* = 1.3$. In this case we consider $H_0$ defined by  $\gamma_{2}^*=\gamma_{12}^*=0$. In each simulation settings, $x_j = j$ and $\sigma$ is chosen equal to $1.5$.

 Let us denote by $\mathcal{M}_2$ the logistic model with three random effects where we set  $g(\varphi_{i},x_j) = \frac{\varphi_{i1}}{1+\exp\left(-\frac{x_{j}-\varphi_{i2}}{\varphi_{i3}}\right)}$. We set $\beta^* = (200,500,150)^t$, $\gamma^*_2 =50$, $\gamma^*_{3} = 15$ and $\gamma^*_{23} = 375$, corresponding to a correlation coefficient of $0.5$ between $\varphi_{i2}$ and $\varphi_{i3}$. We consider here  the null hypothesis $H_0$ defined by $\gamma_{1}^*=\gamma_{12}^*=\gamma_{13}^*=0$. In the sub-model with two random effects, we set $\beta^* = (200,500)^t$ and $\gamma^*_2 = 50$. In this case we consider $H_0$ defined by  $\gamma_{1}^*=\gamma_{12}^*=0$. The vector of observation times $(x_{j})$ is the same for all the individuals, and is defined as a vector of 20 equally spaced values between 50 and 1000, plus 5 equally spaced values between 1100 and 1500. In each simulation settings, $\sigma$ is chosen equal to 10.
When only two random effects are considered in the model, $\beta_3$ is fixed to 150 and not estimated by the algorithm.

We consider several test cases.
For each, to evaluate the level of the test we generate $K$ datasets $D_{0,1},\dots,D_{0,K}$ under the null hypothesis, and we denote by $\hat{\theta}_ {0,k}$ (resp. $\hat{\theta}_{1,k}$) the maximum likelihood estimates of $\theta^*$ using the dataset $D_{0,k}$ under $H_0$ (resp. $H_1$). 
The likelihood ratio test statistics estimate is denoted by $\hat{LRT}_k$.
Then, the empirical level of the test for a sample size $K$ is equal to:
\begin{equation}
	\hat{\alpha}_{K} = \frac{1}{K} \sum_{k=1}^K \mathbb{1}_{\hat{LRT}_k > c_{\alpha}},
\end{equation}
where $c_{\alpha}$ is the $(1-\alpha)$ quantile of the limiting distribution of the LRT statistics. In practice, $c_{\alpha}$ is not always available in a closed form and may be estimated as mentioned in Section \ref{sec:practical}.

Parameter estimation was performed either using the \texttt{lmer} function implemented in the R package \texttt{lme4} \citep{lme4}, for the linear model, or using the SAEM algorithm implemented in the R package \texttt{saemix} \citep{Com11}, for the nonlinear model. Others parts of the codes were also developed in R.  

Note that in the linear mixed model case, the Fisher information matrix $I_*$ is known and is given by:
\begin{equation}\label{eq:fimLinear}
(I_*)_{i,j} = \left(\frac{\partial X\beta^*}{\partial \theta_i}\right)^t  \Omega^{-1} \ \frac{\partial X\beta^*}{\partial \theta_j} + \frac{1}{2} \text{Tr} \left(\Omega^{-1} \frac{\partial \Omega}{\partial \theta_i} \Omega^{-1} \frac{\partial \Omega}{\partial \theta_j} \right),
\end{equation}
where $\Omega = Z\Gamma^* Z^t + (\sigma^*)^2 I_J$, $\theta_i$ is the $i$-th element of vector $\theta$,  $\text{Tr}(A)$ denotes the trace of matrix $A$, for any matrix $A$, and where for a matrix $A$ of size $m \times n$, $\frac{\partial A}{\partial x}$ is the matrix of size $m \times n$ whose element $(i,j)$ is given by $\left( \frac{\partial A}{\partial x} \right)_{i,j} = \frac{\partial A_{ij}}{\partial x}$.

\subsubsection{Case studies and results}\label{sec:testcases}
For the two models ${\cal M}_1$ and  ${\cal M}_2$, we will consider five test cases as follows:
\begin{itemize}
	\item[\textsc{Case} 1:] Testing that one variance is zero in a model with two independent random effects.
In this case, the limiting distribution of the LRT is the mixture $0.5\chi_0^2 + 0.5 \chi_1^2$. 
	\item[\textsc{Case} 2:]  Testing that one variance is zero in a model with two non independent random effects.
In this case, the limiting distribution is the mixture $0.5 \chi_1^2 + 0.5 \chi_2^2$. 
	\item[\textsc{Case} 3:]  Testing that one variance is zero in a model with three independent random effects.
Here, the limiting distribution is the mixture $0.5 \chi_0^2 + 0.5 \chi_1^2$.
	\item[\textsc{Case} 4:]  Testing that one variance is zero in a model with three non independent random effects. In this case, the limiting distribution is the mixture $0.5 \chi_2^2 + 0.5 \chi_3^2$.
	\item[\textsc{Case} 5:]  Testing that two variances are zero in a model with three independent random effects.
Here, the limiting distribution is the mixture $w_{0,2} \chi_0^2 + 0.5 \chi_1^2 + (0.5-w_{0,2}) \chi_2^2$ (see Section \ref{sec:weights}).
\end{itemize}

Note that the limiting distribution is the same whatever the linear or nonlinear structure of the model. However it depends strongly on the correlation structure of the random effects.

We first analyzed the finite sample size properties of the LRT statistics when performing the test in the linear model $\mathcal{M}_1$ involving  two or three random effects, with and without correlations between the random effects. We started by testing that the variance of one random effect is zero. We computed the empirical level as detailed above  for nominal level $\alpha$ in $\{0.01,0.05,0.10\}$. The sample size $\nbInd$ varied in $\{100,500\}$. Results are presented in Tables \ref{tab:levelsLinearp2} and \ref{tab:levelsLinearp3r1}. We observe that the empirical levels are closer to nominal ones when $\nbInd$ grows, for random effects involving two and three components, whatever the correlation between the random effects. 

\begin{table}[h]
\centering
\caption{Empirical level for a given theoretical level $\alpha \in \{0.01,0.05,0.10\}$, when testing that one variance is zero in the \textbf{linear model} $\mathcal{M}_1$ with two random effects associated with a covariance matrix $\Gamma$ which is either full or diagonal, evaluated on $K=10000$ datasets of size $\nbInd  \in \{100,500\}$}
\label{tab:levelsLinearp2}
\begin{tabular}{|c|cc|cc|}
\hline
\multirow{2}{*}{$\alpha$} &  \multicolumn{2}{c|}{$\Gamma$ diagonal} & \multicolumn{2}{c|}{$\Gamma$ full} \\
\cline{2-5}
 & $N=100$  & $N=500$  & $N=100$  & $N=500$   \\
\hline
0.01 & 0.008  & 0.008  & 0.008 &  0.009 \\
0.05 & 0.040  & 0.044  & 0.040 &  0.045  \\
 0.10 & 0.084  & 0.089 & 0.085 &  0.092  \\
\hline
\end{tabular}
\end{table}

\begin{table}[h]
\centering
\caption{Empirical level for a given  theoretical level $\alpha  \in \{0.01,0.05,0.10\}$, when testing that one variance is null in the \textbf{linear model} $\mathcal{M}_1$ with three random effects associated with a covariance matrix $\Gamma$ which is either full or diagonal, evaluated on $K=10000$ datasets of size $\nbInd   \in \{100,500\}$}
\label{tab:levelsLinearp3r1}
\begin{tabular}{|c|cc|cc|}
\hline
\multirow{2}{*}{$\alpha$} &  \multicolumn{2}{c|}{$\Gamma$ diagonal} & \multicolumn{2}{c|}{$\Gamma$ full}  \\
\cline{2-5}
 & $N=100$  & $N=500$  & $N=100$ &  $N=500$ \\
\hline
0.01 & 0.008   & 0.009  & 0.007 &  0.008  \\
0.05 &  0.045  & 0.049  & 0.040 &  0.044  \\
0.10 &  0.088  & 0.097  & 0.082 &  0.093  \\ 
\hline
\end{tabular}
\end{table}

Let us now highlight that one can be led to false conclusions when performing the LRT test in a model without taking into account the presence of correlations between random effects. Indeed, we considered the test of one variance equal to zero in model $\mathcal{M}_1$ with a correlation between the two random effects. We have computed the empirical quantiles corresponding to the limiting distribution obtained when assuming no correlation between the randon effects, namely the distribution $0.5\chi_0^2+0.5 \chi_1^2$.  The empirical levels were computed for any nominal level $\alpha$ in $\{0.01,0.05,0.10\}$ for a sample size $\nbInd$ equal to $500$. Results are displayed in Table  \ref{tab:test2corrquantind}. We observed that the empirical levels in column $3$ are too large, leading to possibly wrong conclusions. This emphasizes that the presence of correlations between the random effects in the model plays a crucial role when performing a test on variance components in mixed effects models.

\begin{table}[htb]
\centering
\caption{Empirical level for a given  theoretical level $\alpha  \in \{0.01,0.05,0.10\}$, when testing that one variance is zero in the \textbf{linear model} $\mathcal{M}_1$ with two correlated random effects, using the theoretical quantiles of the limiting distribution $0.5\chi_1^2+0.5 \chi_2^2$ (column 2), and using the quantiles of the limiting distribution $0.5\chi_0^2+0.5 \chi_1^2$ obtained when considering uncorrelated random effects (column 3) for $\nbInd = 500$.}
\label{tab:test2corrquantind}
\begin{tabular}{|c|c|c|}
\hline
$\alpha$&$ \hat{\alpha}_{0.5\chi_1^2+0.5 \chi_2^2}$ &  $\hat{\alpha}_{0.5\chi_0^2+0.5 \chi_1^2}$\\
\hline
0.01 & 0.009 & 0.050 \\
0.05 & 0.045 & 0.174 \\
0.10 & 0.092 & 0.311 \\
 \hline
\end{tabular}
\end{table}

We then evaluated the finite sample size properties of the LRT statistics when performing the test of two variances equal to zero in the linear model $\mathcal{M}_1$ involving three independent random effects.  The empirical levels were computed for any nominal level $\alpha$ in $\{0.01,0.05,0.10\}$, for a sample size $\nbInd$ variyng in $\{100,500, 800\}$. The results are detailed in Table  \ref{tab:levelsLinearp3r2}. As previously, we observe that the empirical levels converge toward the nominal ones when $\nbInd$ grows. However the asymptotic effect seems to occur slower when testing that two variances are equal to zero. This may be explained by the fact that more parameters have to be estimated under the null hypothesis than in the case where one variance is tested equal to zero.

\begin{table}[h]
\centering
\caption{Empirical level for a given theoretical level $\alpha \in \{0.01,0.05,0.10\}$, when testing that the two variances $\gamma_1^2$ and $\gamma_2^2$ are zero, in the \textbf{linear model} $\mathcal{M}_1$  with three random effects associated with a diagonal covariance matrix $\Gamma$, evaluated on $K=10000$ datasets of size $\nbInd   \in \{100,500,800\}$}
\label{tab:levelsLinearp3r2}
\begin{tabular}{|c|ccc|}
\hline
$\alpha$ & $N=100$  & $N=500$ & $N=800$ \\
\hline
0.01 & 0.007 & 0.007 & 0.010  \\
0.05 & 0.036  & 0.045 & 0.048  \\
0.10 & 0.074  & 0.090 & 0.096 \\ 
\hline
\end{tabular}
\end{table}

Let us now focus on the nonlinear model $\mathcal{M}_2$. We analyzed  the finite sample size properties of the LRT statistics when performing the test  in this model involving  two and three random effects, with and without correlations between random effects. The empirical levels were computed for any nominal level $\alpha$ in $\{0.01,0.05,0.10\}$ and a sample size $\nbInd$ equal to $500$. The results are detailed in Table  \ref{tab:levelsnonlinear}. We observed that, in the case of this nonlinear mixed effects model, the empirical levels are not as close to the nominal ones as in the linear case, and are lower than the nominal levels. We studied the asymptotic behaviour of the empirical levels and noticed that they were not improved when $\nbInd$ was increased to 1000 (results not presented). Due to the computational time involved, we did not test higher values of $\nbInd$. These numerical results might be explained by the numerical integrations which have to be performed to compute the LRT statistics in nonlinear mixed effects models, which is not the case in the linear setting.  However the empirical levels in the considered model with presence of correlation between the random effects were globally slightly lower than those in the models without correlations. This may be explained again by the fact that the models with correlations involved more parameters to estimate. 
 We also observed that the empirical levels in model $\mathcal{M}_2$ with $3$ random effects are lower than those in model $\mathcal{M}_2$ involving only $2$ random effects. The same argument can be retained in this case.

\begin{table}[h]
\centering
\caption{Empirical level for a given theoretical level $\alpha \in \{0.01,0.05,0.10\}$, when testing that one variance is zero, in the \textbf{nonlinear growth curve model} $\mathcal{M}_2$ with $\nbMiss$ random effects, associated with a covariance matrix $\Gamma$ which is either full or diagonal, evaluated on $K=1000$ datasets of size $\nbInd = 500$}
\label{tab:levelsnonlinear}
\begin{tabular}{|c|c|c|c|c|}
\hline
\multirow{1}{*}{$\alpha$} & \multicolumn{2}{c|}{$\nbMiss=2$} & \multicolumn{2}{c|}{$\nbMiss = 3$} \\
\cline{2-5}
  & diagonal & full & diagonal & full \\
 0.01 &  0.003   & 0.007 & 0 & 0.003 \\
0.05 & 0.038 &  0.033 & 0.040  & 0.033\\
0.10 & 0.082 &  0.073 & 0.077 & 0.073\\
\hline
\end{tabular}
\end{table}

Finally we  assessed the empirical power of the procedure when performing the test of one variance equal to zero in  the  model $\mathcal{M}_2$ without correlation between the two random effects. We computed the empirical power for a nominal level $\alpha$ equal to $0.05$ and for  $\nbInd$ equal to $500$. For $\gamma_1$ in $\{2,5,7\}$, we obtain respectively $\{0.23, 0.57, 0.98\}$.
 The procedure seemed to have good properties for detecting alternatives even though we have no theoretical results to assess this observation.

\subsection{Real data analysis}
The method was illustrated on two sets of real data. The first one is the famous dental growth dataset from \cite{Pott64}, in which the distance from the center of the pituitary gland to the pteryomaxillary fissure was measured at 4 different ages for 27 children (16 boys and 11 girls). This dataset is available  in the \texttt{R} package \texttt{mice}.
The second dataset comes from a study of coucal growth rates, available as a Dryad package \citep{dryad}. Body masses of 678 nestlings from two species (white-browed coucals and black coucals) were recorded every two days from their hatching date until they left the nests. In this paper, we only consider data from the white-browed coucals species, corresponding to the highest sample size (385 individuals).

\subsubsection{Dental growth data}
A linear model was fitted to the dental growth data using the \texttt{lme4} package, with two random effects as described in Section \ref{sec:simsettings} (model $\mathcal{M}_1$). More precisely, if we denote by $y_{ij}$, $1 \leq i \leq 27$, $1 \leq j \leq 4$, the dental measurement of child $i$ at age $x_j$, the following model was considered, with a random slope and a random intercept:
\begin{align}\label{eq:potthoff}
y_{ij} & = \beta_1 + \varphi_{i1} + (\beta_2 + \varphi_{i2}) x_j + \varepsilon_{ij}, \quad \varepsilon_{ij} \sim \mathcal{N}(0,\sigma^2) \\
\varphi_i & = (\varphi_{i1}, \varphi_{i2})^t \sim \mathcal{N}(0,\Gamma).
\end{align}

We tested whether the variance of the slope is equal to $0$. For that purpose we considered \textsc{Case} 1 and \textsc{Case} 2 described in Section \ref{sec:testcases}, according to the structure of $\Gamma$ one wishes to consider in the alternative hypothesis. In the first case, we considered a diagonal covariance matrix $\Gamma = \diag(\gamma_1^2, \gamma_2^2)$ and we tested $H_0 : \{ \gamma_1^2 = 0, \gamma_2^2 \geq 0 \}$ against $H_1 : \{ \gamma_1^2 \geq 0, \gamma_2^2 \geq 0 \}$. In the second case we considered a full covariance matrix $\Gamma = (\gamma_1^2 \ \gamma_{12} \mid \gamma_{12} \ \gamma_2^2)$ and we tested $H_0 : \{ \gamma_1^2 = \gamma_{12} = 0, \gamma_2^2 \geq 0 \}$ against $H_1 : \{ \Gamma \geq 0 \}$.
In both cases, we computed the likelihood ratio test and compared it to the rejection threshold $q_{\alpha}^{d}$ associated to the limiting distribution $d$ and corresponding to an asymptotic level  $\alpha$. 
We recall that in \textsc{Case} 1 the limiting distribution of the test statistics is the mixture $0.5\chi_0^2 + 0.5 \chi_1^2$, while in \textsc{Case} 2 the limiting distribution of the test statistics is the mixture $0.5\chi_1^2 + 0.5 \chi_2^2$.
We also computed the $p$-value associated to the test.

In the case where a diagonal structure is assumed for $\Gamma$, the test statistics is equal to $LRT_{diag} = 3.651$, the rejection threshold is equal to $q^{0.5\chi_0^2 + 0.5 \chi_1^2}_{0.05} = 2.706$ and the $p$-value is equal to 0.028. On the other hand, when a full covariance structure is assumed for $\Gamma$, the test statistics is equal to $LRT_{full} = 4.178$, the rejection threshold is equal to $q^{0.5\chi_1^2 + 0.5 \chi_2^2}_{0.05} = 5.139$ and the $p$-value is equal to 0.082. 
Such result highlights in particular that depending on the structure of $\Gamma$, the conclusion regarding the null hypothesis can be different depending on the level of the test. In our test, at the asymptotic level of 5\%, if we assume that there is no correlation between the two random effects, we reject the null hypothesis, but if we assume that the two random effects are correlated, we do not reject the null hypothesis.

\subsubsection{Growth rate of coucals}

A nonlinear model was fitted to the white-browed coucals data using the \texttt{saemix} package, with three random effects as described in Section \ref{sec:simsettings} (model $\mathcal{M}_2$). More precisely, if we denote by $y_{ij}$, $1 \leq i \leq 385$, $1 \leq j \leq n_i$ the body mass of nestling $i$ at age $x_j$, we considered the following model:
\begin{equation}\label{eq:coucal}
  y_{ij}  =\frac{\varphi_{i1}}{1+\exp\left(\frac{x_{j}-\varphi_{i2}}{\varphi_{i3}}\right)}, \quad \varepsilon_{ij} \sim \mathcal{N}(0,\sigma^2), \quad \varphi_i  \sim \mathcal{N}(\beta,\Gamma),
\end{equation}
where $\varphi_{i1}$ is the asymptotic body mass of individual $i$, $\varphi_{i2}$ the age in days at which individual $i$ reaches half its asymptotic body mass and $\varphi_{i3}$ the growth rate of individual $i$.

We tested whether the variances of the inflexion point and the growth rate were equal to 0, and thus we considered \textsc{Case} 4 described in Section \ref{sec:testcases}. We considered a diagonal covariance matrix $\Gamma = \diag(\gamma_1^2, \gamma_2^2, \gamma_3^2)$ and we tested $H_0 : \{\gamma_1^2 \geq 0, \gamma_2^2 = \gamma_3^2 = 0\}$ against $H_1 : \{\gamma_1^2 \geq 0, \gamma_2^2 \geq 0, \gamma_3^2 \geq 0\}$. The limiting distribution of the LRT statistics is the mixture $w_{0,2} \chi_0^2 + 0.5 \chi_1^2 + (0.5-w_{0,2}) \chi_2^2$, where $w_{0,2}$ is defined in Section \ref{sec:weights} and can be computed from the correlation coefficient between parameters $\gamma_2^2$ and $\gamma_3^2$ obtained from the Fisher information matrix. 

The estimated Fisher information matrix $\hat{I}$ is obtained as an output of the \texttt{saemix} package when the full model corresponding to $H_1$ is fitted to the data. From this matrix we can easily compute the covariance matrix $\hat{V} =  R\hat{I}^{-1} R^t$, where $R = (\mathbf{0}_{2\times 4}  \ | \ I_2 \ | \ \mathbf{0}_{2\times1})$, and the corresponding correlation matrix $\hat{C} = \text{diag}(\hat{V} )^{-\frac{1}{2}} \, \hat{V} \, \text{diag}(\hat{V} )^{-\frac{1}{2}}$. Then, the correlation coefficient $\hat{\rho}_{12}$ needed to compute $w_{0,2}$, is the element $(1,2)$ of $\hat{C} $.

In our case, $\hat{\rho}_{12} = -0.644$, leading to the three following weights: $w_{0,2} = 0.139$, $w_{1,2} = 0.5$ and $w_{2,2} = 0.361$, and thus to the limiting distribution $0.139 \chi_0^2 + 0.5 \chi_1^2 + 0.361 \chi_2^2$. The likelihood ratio test statistics is equal to $LRT=3.119$, and the rejection threshold is equal to $q^{0.139 \chi_0^2 + 0.5 \chi_1^2 + 0.361 \chi_2^2}_{0.05} = 4.682$. 
The corresponding $p$-value is evaluated at 0.114 on this dataset. In other words, assuming a diagonal covariance matrix $\Gamma$, we do not reject the null hypothesis that both the inflection point and the growth rates are fixed effects and do not vary among individuals, at the asymptotic level of 5\%.

\section{Discussion}\label{sec:discussion}

We considered in this paper the likelihood ratio test for testing variance components in general  mixed effects models, including nonlinear ones,  and established that its asymptotic distribution is a chi-bar-square one. We also identified the corresponding weights.
We highlighted in particular that this distribution  depends on the presence of correlations between the random effects.  
We also provided practical guidelines for the computation of the test statistics. We carried out a simulation study on a linear mixed effects model and on a  nonlinear growth curve model to illustrate the finite sample size properties of the procedure. The simulations also showed the impact of considering that the random effects are independent or not in the model considered.

Several perspectives are of great interest from a theoretical  as well as from a practical point of view. For example, the issue of variance components testing in mixed effects models  has gained an increasing interest in plant growth modelling. Models of plant growth have raised expectations to help improve the understanding of gene by environment interactions by developing a predictive capacity that scales from genotype to phenotype \citep{letort2008}. In plant ecophysiological models, one genotype should be represented by one unique set of parameters, and reversely, two different genotypes should potentially be characterized by two different sets of parameters \citep{Tar03}. 
Such models are often descriptive ones, involving mechanistic parameters. Therefore, considering these parameters as random effects is relevant in order to understand how they vary within a given population \citep{Baey16} and statistical tools to identify fixed from random effects are necessary.  Therefore, adapted tools to compute precisely the proposed test statistics in nonlinear mixed effects models have to be developed. Indeed  the reliability of the test is linked to the precise evaluation of the test statistics. Moreover, the procedure opens very promising perspectives in applications with high dimensional random effects, in particular in genetics. However this would lead to many computational issues, in particular for the computation of the Fisher information matrix. Again, more advanced works on the computational methods are still necessary to allow tackling high-dimensional problems with confidence. Besides, from a theoretical point of view it would be interesting to establish results for the test power. Finally since the asymptotic regime is often not reached in practice, it would be of great  interest to develop a finite sample-size procedure using for example bootstrap methods or permutation tests in the spirit of the ones developed in the context of linear mixed effects models.

\section{Appendix}

We begin by recalling the definition of an approximating cone:
\begin{definition}\label{def:cone}
\citep{Cher54}. A set $\mathcal{A}$ is said to be an \emph{approximating cone} of a set $\Theta$ at $\theta_0 \in \Theta$ if 
\begin{align*}
d(\theta,\mathcal{A}) & = o(\parallel \theta - \theta_0 \parallel) \text{ for }\theta \in \Theta\\
d(a,\Theta) & = o(\parallel a - \theta_0 \parallel) \text{ for } a \in \mathcal{A}.
\end{align*}
\end{definition}

\subsection{Proof of Theorem \ref{th:lrt}}

\paragraph{Consistency of the MLE on $\Theta_0$ and $\Theta$}
The $\sqrt{\nbInd}$-consistency of the MLE when the true parameter value $\theta^*$ lies on the boundary of the parameter space is ensured by results from \cite{And99}.

\paragraph{Tangent and approximating cones}
The next step of the proof is to show that both $\Theta_0$ and $\Theta$ can be approximated by cones, on which the consistency of the MLE also holds.

The sets $\Theta_0$ and $\Theta$ are defined by:
\begin{align*}
	\Theta_0 & = \{\theta \in  \mathbb{R}^q \mid \beta \in \mathbb{R}^p, \Gamma_1 \in \mathbb{S}_+^{p-r}, \Gamma_{12} = 0, \Gamma_2 = 0, \Sigma \in \mathbb{S}^{\nbMeasPerInd}_+ \} \\
	\Theta & = \{\theta \in  \mathbb{R}^q \mid \beta \in \mathbb{R}^p, \Gamma \in \mathbb{S}_+^{p}, \Sigma \in \mathbb{S}^{\nbMeasPerInd}_+ \},
\end{align*}
where the constraints are that $\Gamma_1$, of size $(p-r) \times (p-r)$ in $\Theta_0$, and $\Gamma$, of size $p \times p$ in $\Theta$, are positive semi-definite.
Since a matrix is positive semi-definite if and only if all its leading principal minors are positive (Sylvester's criterion), both $\Theta_0$ and $\Theta$ can be written as a set of polynomial equalities and inequalities corresponding to the different determinants. In particular, it means that both $\Theta_0$ and $\Theta$ are semi-algebraic sets. They are therefore \textit{Chernoff-regular}, i.e. they admit approximating cones at every point, and in particular at every point $\theta^* \in \Theta_0$ \citep{Drton09}.

According to Definition \ref{def:cone}, it means that if a sequence $\hat{\theta}_\nbInd$ is a $\sqrt{\nbInd}$-consistent estimator for $\theta^*$, then the squared distance between $\hat{\theta}_\nbInd$ and each parameter space $\Theta_0$ and $\Theta$ is equal, up to a term of order $1/\nbInd$ in probability, to the squared distance between $\hat{\theta}_\nbInd$ and the corresponding approximating cone at $\theta^*$. In other words, around the true value, it is possible to substitute $\Theta_0$ and $\Theta$ by their approximating cones, ensuring the consistency of the MLE on the approximating cones of $\Theta_0$ and $\Theta$ at $\theta^*$.

Approximating and tangent cones are linked, due to \cite{Gey94}. More precisely, if we denote by $T(\Theta, \theta^*)$ the tangent cone of $\Theta$ at $\theta^*$ and by $\mathcal{A}(\Theta,\theta^*)$ the approximating cone of $\Theta$ at $\theta^*$ then $\mathcal{A}(\Theta,\theta^*) = \theta^* + T(\Theta, \theta^*)$. 
In particular, it means that it is also possible to substitute the sets $\Theta_0$ and $\Theta$ by their tangent cones, on which the consistency of the MLE also holds.

\paragraph{Quadratic expansion of log-likelihood function}
Under conditions \ref{cond:likelihood} and using Taylor series expansion (see \cite{Silv94} and \cite{And99}), we can derive a quadratic expansion of the log-likelihood function. For the cases where some components of $\theta^*$ are on the boundary of the parameter space, one can use directional derivatives for those components. Let $K > 0$ and let us denote by $I_\nbInd(\theta^*)$ the Fisher information matrix based on $\nbInd$ observations, i.e. $I_\nbInd (\theta^*) = N I_*$. Then, we have:
\begin{align}\label{eq:dvtvrais}
\nonumber	\ell_N(\theta)  & = \ \ell_\nbInd(\theta^*) + (\theta - \theta^*)^t S_\nbInd(\theta^*) - \frac{1}{2} (\theta - \theta^*)^t I_\nbInd (\theta^*) (\theta - \theta^*) + r_\nbInd (\theta - \theta^*), \\
\nonumber & = \ \ell_\nbInd(\theta^*) + \frac{1}{\sqrt{\nbInd}} u^t S_\nbInd (\theta^*) - \frac{1}{2} u^t I_* u + \tilde{r}_\nbInd (u), \\
	& = \ \ell_\nbInd(\theta^*) + \frac{1}{2\nbInd} S_{\nbInd} (\theta^*)^t I_*^{-1} S_{\nbInd} (\theta^*) - \frac{1}{2} [Z_{\nbInd} - u]^t I_* [Z_{\nbInd} - u] + \tilde{r}_\nbInd(u),
\end{align}
where  $\ell_N(\theta)$ stands for $\ell(y_1^N;\theta)$, $S_{\nbInd} (\theta)= (\partial/\partial \theta) \ell_N(\theta)$ is the score function, $Z_{\nbInd} =\nbInd^{-1/2} I_*^{-1} S_{\nbInd}(\theta^*)$, and where $\sup_{||u|| < K} |\tilde{r}_\nbInd (u) | = o_p(1).$ 

The asymptotic normality of the score function can be shown using the central limit theorem under appropriate conditions on the likelihood function $f$ which are satisfied under conditions \ref{cond:likelihood}. We then have that $\nbInd^{-1/2} S_{\nbInd} (\theta^*) \xrightarrow{d} \mathcal{N}(0,I_*)$, where the components of $S_{\nbInd}(\theta^*)$ and $I_*$ corresponding to the elements of $\theta^*$ that are on the boundary of the parameter space $\Theta$ are directional derivatives.
In particular, it implies that $Z_{\nbInd}  \xrightarrow{d} Z$, where $Z =  \mathcal{N}(0,I_*^{-1})$. 

Let us define $\| x \|_V := x^t V x$. Thanks to the $\sqrt{\nbInd}$- consistency of the MLE on both sets $\Theta_0$ and $\Theta$, it is possible to restrict ourselves to a neighborhood of the form $\{\theta \mid \sqrt{\nbInd} ||\theta - \theta^* || < K \}$ for some positive constant $K$, in which case the remainder term in equation \eqref{eq:dvtvrais} becomes negligible. 
Moreover, since the first two terms in \eqref{eq:dvtvrais} do not depend on $\theta$, the likelihood ratio test statistics can be written as:
\begin{align*}
LRT_{\nbInd} = & - 2 [ \sup_{\theta \in \Theta_0} \ell_{\nbInd}(\theta) - \sup_{\theta \in \Theta} \ell_{\nbInd}(\theta) ] \\
	= & \inf_{u \in \Theta_0} [Z_{\nbInd} - u]^t I_* [Z_{\nbInd} - u] - \inf_{u \in \Theta} [Z_{\nbInd} - u]^t I_* [Z_{\nbInd} - u] + o_p(1)\\
	= & \inf_{u \in \Theta_0} \|Z_{\nbInd} - u\|^2_{I_*}   - \inf_{u \in \Theta} \|Z_{\nbInd} - u\|^2_{I_*} + o_p(1).
\end{align*}

Now, we can substitute $\Theta_0$ and $\Theta$, around $\theta^*$, by their tangent cones denoted by $T(\Theta_0,\theta^*)$ and $T(\Theta,\theta^*)$ respectively. 
Recall that $Z_\nbInd = \nbInd^{-1/2} I_*^{-1} S_{\nbInd}(\theta^*)$ and $u = \sqrt{\nbInd} (\theta - \theta^*)$. Then, following the lines of the proof of \cite{Sil11}, we have that:
\begin{align}\label{eq:diffDistanceCones}
\nonumber	  LRT_\nbInd = & \inf_{\theta \in \Theta_0} \nbInd \|\nbInd^{-1/2}Z_{\nbInd} - (\theta - \theta^*)\|^2_{I_*}   - \inf_{\theta \in \Theta}  \nbInd \|\nbInd^{-1/2}Z_{\nbInd} - (\theta - \theta^*)\|^2_{I_*} + o_p(1) \\
\nonumber	 = & \ \nbInd \| \nbInd^{-1/2}Z_{\nbInd} + \theta^*- \Theta_0\|^2_{I_*}   -  \nbInd  \| \nbInd^{-1/2}Z_{\nbInd} + \theta^*- \Theta\|^2_{I_*} + o_p(1) \\
\nonumber = & \ \nbInd \| \nbInd^{-1/2}Z_{\nbInd} + \theta^*- \mathcal{A}(\Theta_0,\theta^*)\|^2_{I_*}   -  \nbInd  \| \nbInd^{-1/2}Z_{\nbInd} + \theta^*- \mathcal{A}(\Theta,\theta^*)\|^2_{I_*} + o_p(1) \\
\nonumber = & \ \| Z_{\nbInd} - T(\Theta_0,\theta^*)\|^2_{I_*}   -  \| Z_{\nbInd} - T(\Theta,\theta^*)\|^2_{I_*} + o_p(1) \\
					\xrightarrow{d} & \| Z - T(\Theta_0,\theta^*)\|^2_{I_*} -  \|Z - T(\Theta,\theta^*) \|^2_{I_*},
\end{align}
where the transition between lines 2 and 3 uses Corollary 4.7.5 in \cite{Sil11}.

In particular, the asymptotic distribution of $LRT_\nbInd$ is the distribution of the likelihood ratio test statistics for testing that the mean of a multivariate Gaussian distribution is in $T(\Theta_0,\theta^*)$, against the alternative that it is in $T(\Theta,\theta^*)$, based on one single observation $Z$.

\paragraph{Asymptotic distribution as a chi-bar-square}

Using Proposition \ref{prop:cone}, we can show that for covariance matrices diagonal or full,  $T(\Theta_0,\theta^*)$ is a linear space, which is contained in $T(\Theta,\theta^*)$, a closed convex cone. This can also be shown for general structures. Therefore, applying Theorem 3.7.1 of \cite{Sil11}, we can prove that the asymptotic distribution identified in \eqref{eq:diffDistanceCones} is equal to a chi-bar-square distribution, leading to the result:
\begin{equation*}
LRT_\nbInd 	\xrightarrow{d} \bar{\chi}^2(I^{-1}_*,T(\Theta,\theta^*)\cap T(\Theta_0,\theta^*)^{\perp}).
\end{equation*}

\hfill $\square$

\subsection{Proof of Proposition 1}

To calculate the tangent cones to $\Theta_0$ and $\Theta$ at $\theta^*$, we can use general results from \cite{Hir96} on the definition of tangent cones, and more recent results by \cite{Hir12} on the tangent cone of the set of symmetric positive semi-definite matrices.

We treat here the case where the covariance matrix $\Gamma$ is full, but similar tools can be used in the case where a more sparse structure is assumed for $\Gamma$.

\paragraph{Tangent cone of $\Theta$}
We recall that $\Theta$ is defined as:
\begin{align*}
\Theta & = \{\theta \in  \mathbb{R}^q \mid \beta \in \mathbb{R}^p, \Gamma \in \mathbb{S}_+^{p}, \covmatrixerr \in \mathbb{S}^{\nbMeasPerInd}_+  \} \\
\Theta & =  \mathbb{R}^p \times \mathbb{S}^p_+ \times \mathbb{S}^{\nbMeasPerInd}_+.
\end{align*}

Now, since each term in the above product is a convex cone, we have that $T( \mathbb{R}^p \times \mathbb{S}^p_+ \times \mathbb{S}^{\nbMeasPerInd}_+, (\beta^*,\Gamma^*,\Sigma^*)) =  T(\mathbb{R}^p,\beta^*) \times T(\mathbb{S}^p_+,\Gamma^*) \times T(\mathbb{S}^{\nbMeasPerInd}_+,\Sigma^*)$ (see for example \cite[Proposition 5.3.1.]{Hir96}).
Therefore, the tangent cone of $\Theta$ at $\theta^*$ is given by:
\begin{align*}
		T(\Theta,\theta^*) & = \mathbb{R}^p \times T(\mathbb{S}^p_+,\Gamma^*)  \times T(\mathbb{S}^{\nbMeasPerInd}_+,\Sigma^*),
\end{align*}
where $T(\mathbb{S}^p_+,\Gamma^*)$  is the tangent cone of the set of symmetric positive semi-definite matrices of size $p \times p$ at $\Gamma^*$, and $T(\mathbb{S}^{\nbMeasPerInd}_+,\Sigma^*)$ the tangent cone of the set of symmetric positive semi-definite matrices of size $J \times J$ at $\Sigma^*$.

To identify $T(\mathbb{S}^p_+,\Gamma^*)$ and $T(\mathbb{S}^{\nbMeasPerInd}_+,\Sigma^*)$, we can use the result established by \cite{Hir12}. According to the authors, the tangent cone of $\mathbb{S}_+^p$ at $A \in \mathbb{S}_+^p$ is given by $T_A = \{M \in \mathbb{S}^p \mid \langle Mu, u \rangle \geq 0 \text{ for all } u \in \ker A \}$, where $\mathbb{S}^p$ is the set of symmetric matrices of size $p \times p$.

In our case, since  $\Gamma^* =  \left[\begin{smallmatrix}
\Gamma_{1}^* & 0  \\
 0 & 0
\end{smallmatrix}\right]$, we have:
\begin{align*}
	T(\mathbb{S}^p_+,\Gamma^*) & =  \left\{ M \in \mathbb{S}^p \mid \forall u \in \ker \Gamma^*, \langle Mu, u \rangle \geq 0  \right\} \\
									& =  \left\{ \left( \begin{array}{c|c}
M_{11} &  M_{12} \\
\hline
M_{12}^t & M_{22}
\end{array}
\right) \in \mathbb{S}^p \mid \forall u = (\underbrace{0, \dots, 0}_{p-r}, u_{p-r+1}, \dots, u_p), u^t M u \geq 0 \right\} \\
									& =  \left\{ \left( \begin{array}{c|c}
M_{11} &  M_{12} \\
\hline
M_{12}^t & M_{22}
\end{array}
\right) \in \mathbb{S}^p \mid M_{22} \geq 0 \right\} \\
 T(\mathbb{S}^p_+,\Gamma^*) & = \mathbb{R}^{\frac{(p-r)(p-r+1)}{2}} \times \mathbb{R}^{r(p-r)} \times \mathbb{S}_+^{r}.
\end{align*}

Similarly, we have:
\begin{align*}
T(\mathbb{S}^{\nbMeasPerInd}_+,\Sigma^*) & = \left\{ M \in \mathbb{S}^J \mid \forall u \in \ker \Sigma^*, \langle Mu, u \rangle \geq 0  \right\} \\
											& = \left\{ M \in \mathbb{S}^J \mid \forall u \in \{0\}, \langle Mu, u \rangle \geq 0  \right\} \\
											& = \mathbb{S}^J
\end{align*}

In the end, we get:
\begin{align*}
		T(\Theta,\theta^*) & = \mathbb{R}^p \times \mathbb{R}^{\frac{(p-r)(p-r+1)}{2}} \times \mathbb{R}^{r(p-r)} \times \mathbb{S}_+^{r} \times \mathbb{S}^{\nbMeasPerInd}.
\end{align*}

In the case where $\Gamma$ is diagonal, i.e. when the effects are assumed to be independent, we have:
\begin{align*}
		T(\Theta,\theta^*) & = \mathbb{R}^p \times \mathbb{R}^{p-r} \times \mathbb{R}_+^{r} \times \mathbb{S}^{\nbMeasPerInd}.
\end{align*}	

\paragraph{Tangent cone to $\Theta_0$}
We recall that $\Theta_0$ is defined as:

\begin{align*}
	\Theta_0 = & \{\theta \in  \mathbb{R}^q \mid \beta \in \mathbb{R}^p, \Gamma_1 \in \mathbb{S}_+^{p-r}, \Gamma_{12} = 0, \Gamma_2 = 0, \covmatrixerr \in \mathbb{S}^{\nbMeasPerInd}_+  \} \\
	 = & \mathbb{R}^p \times \mathbb{S}_+^{p-r} \times \{0\}^{r(p-r)} \times \{0\}^{\frac{r(r+1)}{2}} \times \mathbb{S}_+^J.
\end{align*}

Using similar tools than for $\Theta$, we get:
\begin{align*}
T(\Theta_0, \theta^*) & = \mathbb{R}^p \times \mathbb{S}^{p-r} \times \{0\}^{r(p-r)} \times \{0\}^{\frac{r(r+1)}{2}} \times \mathbb{S}^J \\
 & = \mathbb{R}^p \times \mathbb{R}^\frac{(p-r)(p-r+1)}{2} \times \{0\}^{ r(p-r) + \frac{r(r+1)}{2}} \times \mathbb{R}^{\frac{J(J+1)}{2}}.
\end{align*}

Note that in the particular case where $\Gamma$ is diagonal, i.e. when the effects are supposed to be independent, the parameter space $\Theta_0$ and hence its tangent cone can be simplified, and we have:
\begin{align*}
		T(\Theta_0,\theta^*) & = \mathbb{R}^p \times \mathbb{R}^{p-r} \times \{0\}^r \times \mathbb{R}^{\frac{J(J+1)}{2}}.
\end{align*}	

\hfill $\square$

\bibliographystyle{abbrvnat}
\bibliography{biblio}

\begin{thebibliography}{38}
\providecommand{\natexlab}[1]{#1}
\providecommand{\url}[1]{\texttt{#1}}
\expandafter\ifx\csname urlstyle\endcsname\relax
  \providecommand{\doi}[1]{doi: #1}\else
  \providecommand{\doi}{doi: \begingroup \urlstyle{rm}\Url}\fi

\bibitem[Andrews(1999)]{And99}
D.~W. Andrews.
\newblock Estimation when a parameter is on a boundary.
\newblock \emph{Econometrica}, pages 1341--1383, 1999.

\bibitem[Baey et~al.(2016)Baey, Trevezas, and Courn\`ede]{Baey16}
C.~Baey, S.~Trevezas, and P.-H. Courn\`ede.
\newblock {A nonlinear mixed effects model of plant growth and estimation via
  stochastic variants of the EM algorithm}.
\newblock \emph{Communications in Statistics -- Theory and Methods},
  45\penalty0 (6):\penalty0 1643--1669, 2016.

\bibitem[Bates et~al.(2015)Bates, M{\"a}chler, Bolker, and Walker]{lme4}
D.~Bates, M.~M{\"a}chler, B.~Bolker, and S.~Walker.
\newblock Fitting linear mixed-effects models using {lme4}.
\newblock \emph{Journal of Statistical Software}, 67\penalty0 (1):\penalty0
  1--48, 2015.

\bibitem[Chant(1974)]{Chant74}
D.~Chant.
\newblock {On Asymptotic Tests of Composite Hypotheses in Nonstandard
  Conditions}.
\newblock \emph{Biometrika}, 61\penalty0 (2):\penalty0 291--298, 1974.

\bibitem[Chernoff(1954)]{Cher54}
H.~Chernoff.
\newblock {On the Distribution of the Likelihood Ratio}.
\newblock \emph{The Annals of Mathematical Statistics}, 25\penalty0
  (3):\penalty0 573--578, 1954.

\bibitem[Comets et~al.(2011)Comets, Lavenu, and Lavielle]{Com11}
E.~Comets, A.~Lavenu, and M.~Lavielle.
\newblock {SAEMIX}, an {R} version of the {SAEM} algorithm.
\newblock \emph{20$^{th}$ meeting of the {P}opulation {A}pproach {G}roup in
  {E}urope, Athens, Greece}, 2011.

\bibitem[Crainiceanu and Ruppert(2004)]{Crai04a}
C.~M. Crainiceanu and D.~Ruppert.
\newblock {Likelihood ratio tests in linear mixed models with one variance
  component}.
\newblock \emph{Journal of the Royal Statistical Society. Series B: Statistical
  Methodology}, 66\penalty0 (1):\penalty0 165--185, feb 2004.

\bibitem[Davidian and Giltinan(2003)]{Dav03}
M.~Davidian and D.~M. Giltinan.
\newblock {Nonlinear models for repeated measurement data: An overview and
  update}.
\newblock \emph{Journal of Agricultural, Biological, and Environmental
  Statistics}, 8\penalty0 (4):\penalty0 387--419, 2003.

\bibitem[Delmas and Foulley(2007)]{Del07}
C.~Delmas and J.-L. Foulley.
\newblock On testing a class of restricted hypotheses.
\newblock \emph{Journal of statistical planning and inference}, 137\penalty0
  (4):\penalty0 1343--1361, 2007.

\bibitem[Drikvandi et~al.(2013)Drikvandi, Verbeke, Khodadadi, {Partovi Nia},
  and Nia]{Dri13}
R.~Drikvandi, G.~Verbeke, A.~Khodadadi, V.~{Partovi Nia}, and V.~P. Nia.
\newblock {Testing multiple variance components in linear mixed-effects
  models}.
\newblock \emph{Biostatistics}, 14\penalty0 (1):\penalty0 144--159, 2013.

\bibitem[Drton(2009)]{Drton09}
M.~Drton.
\newblock {Likelihood ratio tests and singularities}.
\newblock \emph{Annals of Statistics}, 37\penalty0 (2):\penalty0 979--1012,
  2009.

\bibitem[Fitzmaurice et~al.(2007)Fitzmaurice, Lipsitz, and Ibrahim]{Fitz07}
G.~M. Fitzmaurice, S.~R. Lipsitz, and J.~G. Ibrahim.
\newblock {A note on permutation tests for variance components in multilevel
  generalized linear mixed models}.
\newblock \emph{Biometrics}, 63\penalty0 (3):\penalty0 942--946, 2007.

\bibitem[Geyer(1994)]{Gey94}
C.~J. Geyer.
\newblock On the asymptotics of constrained m-estimation.
\newblock \emph{The Annals of Statistics}, 22\penalty0 (4):\penalty0
  1993--2010, 1994.

\bibitem[Goymann et~al.(2016)Goymann, Safari, Muck, and Schwabl]{dryad}
W.~Goymann, I.~Safari, C.~Muck, and I.~Schwabl.
\newblock Data from: Sex roles, parental care and offspring growth in two
  contrasting coucal species, 2016.
\newblock URL \url{http://dx.doi.org/10.5061/dryad.23gt0}.

\bibitem[Greven et~al.(2008)Greven, Crainiceanu, K{\"{u}}chenhoff, and
  Peters]{Grev08}
S.~Greven, C.~M. Crainiceanu, H.~K{\"{u}}chenhoff, and A.~Peters.
\newblock {Restricted Likelihood Ratio Testing for Zero Variance Components in
  Linear Mixed Models}.
\newblock \emph{Journal of Computational and Graphical Statistics}, 17\penalty0
  (4):\penalty0 870--891, dec 2008.

\bibitem[Hiriart-Urruty and Lemarechal(1996)]{Hir96}
J.~Hiriart-Urruty and C.~Lemarechal.
\newblock \emph{Convex Analysis and Minimization Algorithms I: Fundamentals}.
\newblock Grundlehren der mathematischen Wissenschaften. Springer Berlin
  Heidelberg, 1996.

\bibitem[Hiriart-Urruty and Malick(2012)]{Hir12}
J.~B. Hiriart-Urruty and J.~Malick.
\newblock {A Fresh Variational-Analysis Look at the Positive Semidefinite
  Matrices World}.
\newblock \emph{Journal of Optimization Theory and Applications}, 153\penalty0
  (3):\penalty0 551--577, 2012.

\bibitem[Kuhn and Lavielle(2005)]{Kuh05}
E.~Kuhn and M.~Lavielle.
\newblock {Maximum likelihood estimation in nonlinear mixed effects models}.
\newblock \emph{Computational Statistics \& Data Analysis}, 49\penalty0
  (4):\penalty0 1020--1038, 2005.

\bibitem[Lavielle(2014)]{Lavielle2014}
M.~Lavielle.
\newblock \emph{Mixed effects models for the population approach: models,
  tasks, methods and tools}.
\newblock CRC Press, 2014.

\bibitem[Letort et~al.(2008)Letort, Mahe, Courn{\`e}de, De~Reffye, and
  Courtois]{letort2008}
V.~Letort, P.~Mahe, P.-H. Courn{\`e}de, P.~De~Reffye, and B.~Courtois.
\newblock Quantitative genetics and functional--structural plant growth models:
  simulation of quantitative trait loci detection for model parameters and
  application to potential yield optimization.
\newblock \emph{Annals of Botany}, 101\penalty0 (8):\penalty0 1243--1254, 2008.

\bibitem[Molenberghs and Verbeke(2007)]{Mol07}
G.~Molenberghs and G.~Verbeke.
\newblock {Likelihood Ratio, Score, and Wald Tests in a Constrained Parameter
  Space}.
\newblock \emph{The American Statistician}, 61\penalty0 (1):\penalty0 22--27,
  2007.

\bibitem[Nie(2006)]{Nie06}
L.~Nie.
\newblock Strong consistency of the maximum likelihood estimator in generalized
  linear and nonlinear mixed-effects models.
\newblock \emph{Metrika}, 63\penalty0 (2):\penalty0 123--143, 2006.

\bibitem[Nie(2007)]{Nie07}
L.~Nie.
\newblock {Convergence rate of MLE in generalized linear and nonlinear
  mixed-effects models: theory and applications}.
\newblock \emph{Journal of Statistical Planning and Inference}, 137\penalty0
  (6):\penalty0 1787--1804, 2007.

\bibitem[Pinheiro and Bates(2000)]{Pin00}
J.~Pinheiro and D.~Bates.
\newblock \emph{{Mixed-Effects Models in S and S-PLUS}}.
\newblock Springer. 2000.

\bibitem[Potthoff and Roy(1964)]{Pott64}
R.~F. Potthoff and S.~Roy.
\newblock A generalized multivariate analysis of variance model useful
  especially for growth curve problems.
\newblock \emph{Biometrika}, 51\penalty0 (3-4):\penalty0 313--326, 1964.

\bibitem[Qu et~al.(2013)Qu, Guennel, and Marshall]{Qu2013}
L.~Qu, T.~Guennel, and S.~l. Marshall.
\newblock {Linear score tests for variance components in linear mixed models
  and applications to genetic association studies}.
\newblock \emph{Biometrics}, 69\penalty0 (4):\penalty0 883--892, 2013.

\bibitem[Samuh et~al.(2012)Samuh, Grilli, Rampichini, Salmaso, and
  Lunardon]{Sam12}
M.~H. Samuh, L.~Grilli, C.~Rampichini, L.~Salmaso, and N.~Lunardon.
\newblock {The Use of Permutation Tests for Variance Components in Linear Mixed
  Models}.
\newblock \emph{Communications in Statistics - Theory and Methods}, 41\penalty0
  (16-17):\penalty0 3020--3029, 2012.

\bibitem[Self and Liang(1987)]{Self87}
S.~G. Self and K.-Y. Liang.
\newblock Asymptotic properties of maximum likelihood estimators and likelihood
  ratio tests under nonstandard conditions.
\newblock \emph{Journal of the American Statistical Association}, 82\penalty0
  (398):\penalty0 605--610, 1987.

\bibitem[Shapiro(1985)]{Sha85}
A.~Shapiro.
\newblock Asymptotic distribution of test statistics in the analysis of moment
  structures under inequality constraints.
\newblock \emph{Biometrika}, 72:\penalty0 133--144, 1985.

\bibitem[Shapiro(1988)]{Sha88}
A.~Shapiro.
\newblock {Towards a Unified Theory of Inequality Constrained Testing in
  Multivariate Analysis}.
\newblock \emph{International Statistical Review / Revue Internationale de
  Statistique}, 56\penalty0 (1):\penalty0 pp. 49--62, 1988.

\bibitem[Silvapulle(1994)]{Silv94}
M.~J. Silvapulle.
\newblock {On tests against one-sided hypotheses in some generalized linear
  models.}
\newblock \emph{Biometrics}, 50\penalty0 (3):\penalty0 853--858, 1994.

\bibitem[Silvapulle and Sen(2011)]{Sil11}
M.~J. Silvapulle and P.~K. Sen.
\newblock \emph{Constrained statistical inference: Order, inequality, and shape
  constraints}, volume 912.
\newblock John Wiley \& Sons, 2011.

\bibitem[Silvapulle and Silvapulle(1995)]{Sil95}
M.~J. Silvapulle and P.~Silvapulle.
\newblock A score test against one-sided alternatives.
\newblock \emph{Journal of the American Statistical Association}, 90\penalty0
  (429):\penalty0 342--349, 1995.

\bibitem[Sinha(2009)]{Sin09}
S.~K. Sinha.
\newblock Bootstrap tests for variance components in generalized linear mixed
  models.
\newblock \emph{Canadian Journal of Statistics}, 37\penalty0 (2):\penalty0
  219--234, 2009.

\bibitem[Stram and Lee(1994)]{stramlee94}
D.~O. Stram and J.~W. Lee.
\newblock {Variance components testing in the longitudinal mixed effects
  model.}
\newblock \emph{Biometrics}, 50\penalty0 (4):\penalty0 1171--1177, 1994.

\bibitem[Stram and Lee(1995)]{stramlee95}
D.~O. Stram and J.~W. Lee.
\newblock {Corrections to Variance components testing in the longitudinal mixed
  effects model.}
\newblock \emph{Biometrics}, 51\penalty0 (3):\penalty0 1196, 1995.

\bibitem[Tardieu(2003)]{Tar03}
F.~Tardieu.
\newblock Virtual plants: modelling as a tool for the genomics of tolerance to
  water deficit.
\newblock \emph{Trends in plant science}, 8\penalty0 (1):\penalty0 9--14, 2003.

\bibitem[Wood(2013)]{Wood13}
S.~N. Wood.
\newblock {A simple test for random effects in regression models}.
\newblock \emph{Biometrika}, 100\penalty0 (4):\penalty0 1005--1010, 2013.

\end{thebibliography}


\end{document}